\newif\ifnotes
\tikzset{>={Latex[width=2mm,length=2mm]}}
\newcommand{\mnote}[1]{\ifnotes $\ll$\textsf{\color{purple} Meghal: { #1}}$\gg$ \fi}
\definecolor{halfgreen}{rgb}{0.0, 0.5, 0.25}
\newtheorem{theorem}{Theorem}[section]
\newtheorem{lemma}[theorem]{Lemma}
\newtheorem{conjecture}[theorem]{Conjecture}
\theoremstyle{definition}
\newtheorem{definition}[theorem]{Definition}
\Crefname{theorem}{Theorem}{Theorems}
\Crefname{claim}{Claim}{Claims}
\Crefname{lemma}{Lemma}{Lemmas}
\Crefname{proposition}{Proposition}{Propositions}
\Crefname{corollary}{Corollary}{Corollaries}
\Crefname{definition}{Definition}{Definitions}
\newcommand{\C}{\mathsf{C}}
\newcommand{\del}{\mathsf{del}}
\newcommand{\dec}{\mathsf{query}}
\newcommand{\err}{\mathsf{err}}
\newcommand{\poly}{\text{poly}}
\newcommand{\polylog}{\text{polylog}}
\newcommand\longldots{\makebox[2em][c]{.\hfil.\hfil.}}
\newcommand\longestldots{\makebox[4em][c]{.\hfil.\hfil.}}
\newcommand{\freq}[4]{\mathsf{freq}^{(#1)}(#2;#3;#4)}
\newcommand{\extra}[4]{\mathsf{e}^{(#1)}(#2;#3;#4)}
\newcommand{\Weight}[5]{\mathcal{W}_{#5}^{(#1)}(#2;#3;#4)}
\newcommand{\CWeight}[5]{\overline{\mathcal{W}}_{#5}^{(#1)}(#2;#3;#4)}
\newcommand{\bbN}{\mathbb{N}}
\newcommand{\bbR}{\mathbb{R}}
\newcommand{\bbE}{\mathbb{E}}
\newcommand{\bbP}{\mathbb{P}}
\newcommand{\cD}{\mathcal{D}}
\newcommand{\cE}{\mathcal{E}}
\newcommand{\cF}{\widehat{\mathcal{F}}}
\newcommand{\cG}{\widehat{\mathcal{G}}}
\newcommand{\cL}{\mathcal{L}}
\newcommand{\cQ}{\widehat{\mathcal{Q}}}
\newcommand{\cW}{\mathcal{W}}
\newcommand{\sq}{\normalfont\textsc{q}}
\newcommand{\sd}{\textsc{d}}
\newcommand{\eps}{\varepsilon}
\newcommand{\customlabel}[2]{%
   \protected@write \@auxout {}{\string \newlabel {#1}{{#2}{\thepage}{#2}{#1}{}} }%
   \hypertarget{#1}{#2}
}
\newcommand{\longdef}[3]{
    \stepcounter{figure}
    \vspace{0.15cm}
    {\small
    \begin{tcolorbox}[breakable, enhanced, colback=halfgreen!10, before upper={\parindent15pt\noindent}]
    \begin{center}
    {\bf \underline{\customlabel{def:#2} #1}}
    \end{center}
    
    #3
    \end{tcolorbox}
    }
}
\newcounter{casenum}
\newcounter{subcasenum}
\newcounter{casenump}
\newcommand{\casep}[2]{
    \ifthenelse{\equal{\value{casenump}}{0}}{
    \vskip.5\baselineskip\par\noindent
    }{}
    {\it Case \arabic{casenump}:} {\it #1}
    \vskip0.1\baselineskip
    \begin{addmargin}[1.5em]{1em}
    #2
    \end{addmargin}
    \addtocounter{casenump}{1}
}
\newcounter{subcasenump}
\begin{document}

\title{Constant Query Local Decoding Against Deletions Is Impossible}
\author{Meghal Gupta \medskip \\UC Berkeley\\ \small{\texttt{meghal@berkeley.edu}}}
\date{\today}

\sloppy
\maketitle
\begin{abstract}
Locally decodable codes (LDC's) are error-correcting codes that allow recovery of individual message indices by accessing only a constant number of codeword indices. For substitution errors, it is evident that LDC's exist -- Hadamard codes are examples of $2$-query LDC's. Research on this front has focused on finding the optimal encoding length for LDC's, for which there is a nearly exponential gap between the best lower bounds and constructions.

Ostrovsky and Paskin-Cherniavsky (ICITS 2015) introduced the notion of local decoding to the insertion and deletion setting. In this context, it is not clear whether constant query LDC's exist at all. Indeed, in contrast to the classical setting, Block et al. conjecture that they do not exist. Blocki et al. (FOCS 2021) make progress towards this conjecture, proving that any potential code must have at least exponential encoding length. 

Our work definitively resolves the conjecture and shows that constant query LDC's do not exist in the insertion/deletion (or even deletion-only) setting. Using a reduction shown by Blocki et al., this also implies that constant query locally correctable codes do not exist in this setting.

% Block et al. (FSTTCS 2020) ask the same question in the insertion/deletion setting: Is there an encoding function $\C: \{0,1\}^n \to \{0,1\}^M$ such that any one bit $x[i]$ of the original message can be recovered with high probability using only a constant number of queries to the corrupted transmission? 
\end{abstract}
\thispagestyle{empty}
\newpage
\tableofcontents
\pagenumbering{roman}
\newpage
\pagenumbering{arabic}

\section{Introduction} \label{sec:intro}

The study of classical error-correcting codes dates back to Shannon and Hamming in the mid-1900’s~\cite{Shannon48, Hamming50}. While the initial focus was on substitution errors, the field expanded to address \emph{synchronization errors} with the seminal work of~\cite{Levenshtein66}. Synchronization errors involve inserting or deleting symbols in a transmitted message rather than substituting symbols. Such errors are prevalent in applications such as text/speech processing~\cite{CoumouS08}, DNA storage technologies~\cite{GabrysYM16, LenzSWY19}, and communication complexity~\cite{BravermanGMO17}.

Within the study of classical (substitution) error-correcting codes, the concept of \emph{locally decodable codes} (LDC's) has garnered considerable attention. In a locally decodable encoding $\C$ of a message $x \in \Sigma^n$ for some alphabet $\Sigma$, a receiver can decode any individual symbol $x[i]$ by making a constant number of queries\footnote{The number of queries is known as the locality, and although it is sometimes useful to explore parameter regimes where the locality is super-constant (for example $\polylog(n)$), in this paper we focus on codes with constant locality.} to $\C(x)$, even in the presence of a small constant fraction of adversarial corruption. It is evident that such codes exist in the classical setting, for example Hadamard codes, and research has focused on finding the optimal encoding length. Despite considerable effort, there remains a nearly exponential gap between the best lower bounds and constructions~\cite{Yekhanin12}.

Recently, the work of Ostrovsky and Paskin-Cherniavsky~\cite{OstrovskyP15} introduced the notion of local decoding to insertion/deletion codes. Unlike in the setting of substitutions, here it is not clear whether constant query LDC's exist at all: For a constant-sized alphabet $\Sigma$, is there an encoding function $\C: \Sigma^n \to \Sigma^M$ such that any one symbol $x[i]$ of the original message can be recovered with high probability using only a constant number of queries to the corrupted transmission?

Block et al.~\cite{BlockBGKZ20} conjecture that in contrast to the substitution setting, constant query insertion/deletion LDC's do not exist. That is, even when the encoding can be arbitrarily long in terms of $n$, constant-query LDC's do not exist in the insertion/deletion setting. Subsequent research by Blocki et al.~\cite{BlockiCGLZZ21} makes progress toward this conjecture, demonstrating that any candidate LDC must have at least exponential length relative to $n$, but it does not definitively rule out their existence. Blocki et al.'s negative result holds even when the adversary can only perform deletions (without insertions).

In this work, we fully resolve Block et al.'s conjecture, establishing that constant-query LDC's of any length do not exist in the insertion/deletion setting. Like \cite{BlockiCGLZZ21}, our result holds when the adversary can only perform deletions. Moreover, due to a reduction shown by \cite{BlockiCGLZZ21}, this implies that locally correctable codes (LCC's) do not exist in the deletion setting. 
%Finally, we show that strong relaxed LDC's (SRLDC's)~\cite{BlockBCKGLZ22, BlockB23} do not exist in the adversarial deletion setting, strengthening the exponential lower bound shown by~\cite{BlockBCKGLZ22}.

\subsection{Our Results}

Our main result is that constant query locally decodable codes of any length do not exist in the deletion setting. We refer the reader to Definition~\ref{def:ldc} for a formal definition of locally decodable codes for the deletion channel.

\begin{restatable}{theorem}{mainthm} \label{thm:main}
    For any $\eps>0$, $k\in \bbN$, and alphabet $\Sigma$, there exists a constant $C:=C(\eps, k, |\Sigma|)$ such that for all $n>C$, there is no $k$-query deletion LDC $\C: \Sigma^n \to \Sigma^M$ for any $M$ that is resilient to $\eps$-fraction of deletions.
\end{restatable}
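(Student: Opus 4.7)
My plan is to rule out constant-query deletion LDCs by constructing, for any putative decoder, an adversarial distribution over deletion patterns and messages under which the decoder cannot reliably distinguish pairs of messages that differ in a single index. At a high level, the strategy combines a smoothing/normalization step for the decoder with a combinatorial construction that extends the query-concentration techniques underlying Blocki et al.'s exponential lower bound, pushing them to hold uniformly in $M$.

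I would first reduce to a structured setting: by averaging over the decoder's random coins, assume WLOG that for each index $i$ the decoder is associated with a distribution $Q_i$ over $k$-tuples of positions in the corrupted codeword $\tilde y$. For any deletion pattern $D \subseteq [M]$ of relative weight $\eps$, a query at position $j \in [M(1-\eps)]$ corresponds to a well-defined position $\pi(j;D) \in [M]$ in the original codeword, namely the $j$-th non-deleted position. The central analytic goal is to prove a structural dichotomy for $Q_i$: either the query distribution is so concentrated that the decoder queries only a bounded-size region of the codeword (regardless of $M$), or it is spread enough that many distinct deletion patterns induce overlapping effective query positions in the original codeword, producing confusable views.

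With this dichotomy in hand, I would build the adversarial attack in two stages. First, identify a family of "confusable" deletion patterns $D_1, D_2, \dots$ such that for a random index $i$, the decoder's queries under these patterns concentrate on a common small set of positions $S \subseteq [M]$ in the original codeword. Second, apply a pigeonhole / probabilistic argument over messages restricted to $S$: find two messages $x, x'$ that agree on $S$ (or on the relevant codeword symbols induced by $S$) but differ at some index $i^*$ that the decoder is supposed to decode using only queries into $S$. Forcing the decoder to output the same symbol on both inputs then yields a decoding error probability at least $1/2$ for $i^*$, contradicting the LDC guarantee.

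The main obstacle I anticipate is making this construction length-independent. The prior exponential bound effectively allowed the adversary's "confusable pattern" family to grow with $M$, so the pigeonhole closes only when $M$ is small relative to a doubly-exponential budget; here one needs the confusion set $S$ and the family of patterns to have size depending only on $k$, $\eps$, and $|\Sigma|$. Overcoming this will likely require an iterative refinement of the query distribution — sorting queries by their sensitivity to local shifts of $D$, processing them in rounds, and at each round either trimming insensitive queries or locating a bounded-size "rigid core" — until a constant-size confusable structure emerges regardless of $M$. This is the step where a qualitatively new argument, beyond direct counting, appears necessary.
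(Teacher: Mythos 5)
Your overall goal---showing the decoder's view collapses to a bounded amount of information independent of $M$---matches the paper's strategy, but there is a genuine gap at exactly the step you flag as needing ``a qualitatively new argument'': you have not identified the mechanism that makes the reduction length-independent, and that mechanism is the heart of the proof. The paper's adversary uses a \emph{multi-scale randomized} deletion pattern: the codeword is partitioned into chunks at geometrically growing scales, and in each chunk at each active scale an independently random number of symbols is deleted. This smears the induced original-codeword position of each query uniformly over a window proportional to the gap to the previous query, so that two query tuples whose consecutive gaps agree up to small multiplicative perturbations yield statistically close outputs. The length-independence then comes from a structure lemma: for each output pattern one defines a weight (an expected squared frequency over chunks at scale $\ell$) that is monotone in $\ell$ and bounded in $[0,1]$, so for any code only $O_{\eps,k}(1)$ scales are ``significant''---i.e., scales at which the output distribution of a query tuple actually changes. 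Between consecutive significant scales, all gap sizes are interchangeable, which is what collapses the representative family of query tuples to constant size regardless of $M$. This potential-function argument over scales, iterated recursively over $r$-tuples for $r \le k$ (larger tuples can detect transitions invisible to smaller ones), is the missing ingredient; your proposed ``iterative refinement by sensitivity'' does not supply a bounded monotone quantity whose total variation controls the number of rounds, so the trimming process has no a priori constant bound.

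Two further points. First, your attack is tailored to the decoder's query distribution $Q_i$, but in the model the adversary knows $\C$ and $x$ and \emph{not} the decoder's algorithm; the paper's deletion distribution depends only on $\C$, and every possible query tuple is simultaneously simulable by the constant-size family, which sidesteps this issue. Second, your endgame (pigeonhole over messages agreeing on a fixed set $S$ of original positions) is not quite right as stated, since the original positions hit by a query depend on the random deletion instance; the paper instead compresses each codeword to the constant-length transcript of outputs of the representative queries and runs an entropy argument showing such a transcript cannot determine $n$ message symbols once $n$ exceeds a constant. Either closing step would work, but only after the constant-size simulation lemma is actually established.
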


We conjecture a stronger version of Theorem~\ref{thm:main}, which we hope to see proven or disproven in future work.

\begin{conjecture} \label{conj:main}
    For any $\eps>0$ and $k\in \bbN$, and alphabet $\Sigma$, there exists a constant $C:=C(\eps, |\Sigma|)$ such that for all $n>Ck$, there is no non-adaptive $k$-query deletion LDC $\C: \Sigma^n \to \Sigma^M$ that is resilient to $\eps$-fraction of adversarial deletions.
\end{conjecture}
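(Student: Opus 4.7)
The plan is to refine the argument behind Theorem~\ref{thm:main} so that the constant $C$ depends only on $\eps$ and $|\Sigma|$ (with the allowed $n$ growing linearly in $k$), by heavily exploiting non-adaptivity. Recall that for a non-adaptive $k$-query decoder, the queries used to recover message index $i$ form a distribution $\mu_i$ over $k$-tuples of positions in the received string $y$. After fixing any adversarial deletion pattern $\pi$ of relative weight at most $\eps$, each $\mu_i$ pushes forward under $\pi^{-1}$ to a distribution $\nu_i^{\pi}$ on $k$-tuples of positions in the codeword $c = \C(x)$. The target is to show that the family $\{\nu_i^{\pi}\}_{i \in [n]}$ simply cannot serve $n > C(\eps,|\Sigma|) \cdot k$ indices simultaneously.

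Concretely, I would first boost the decoder to success probability $1 - \eta$ for some small $\eta = \eta(\eps, |\Sigma|)$, then associate to each index $i$ a ``decoding core'' $S_i \subseteq [M]$ consisting of codeword positions that carry non-negligible $\nu_i^{\pi}$-mass on at least one coordinate, simultaneously across all small-weight $\pi$. A standard hybrid-message argument bounds $|S_i|$ in terms of $k$ and $\eps$. The critical step is then an amortization claim: each codeword position can belong to only boundedly many decoding cores, because otherwise a single carefully chosen deletion window of size depending only on $\eps$ and $|\Sigma|$ would knock out the queries of many indices at once and produce a deletion pattern under which the decoder fails on too many messages. This would give $\sum_i |S_i| = O(k)$ with the implicit constant a function of $\eps$ and $|\Sigma|$ alone, so once $n$ exceeds a linear multiple of $k$ some index has an essentially empty core, allowing $x[i]$ to be flipped without the decoder noticing --- contradicting locality.

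The main obstacle is translating query positions in $y$ to query positions in $c$: a single received-string position arises from a wide range of codeword positions as $\pi$ varies, and the ``core'' must be robust against this shift. Controlling this alignment uniformly over all low-weight $\pi$ while keeping the per-core size independent of $k$ is precisely what currently forces $C$ to depend on $k$ in Theorem~\ref{thm:main}. Progress on Conjecture~\ref{conj:main} will most likely require either a structural classification of non-adaptive deletion decoders (e.g., showing the effective distributions $\nu_i^\pi$ concentrate on short intervals up to alignment shift) or a new combinatorial tool showing that one well-placed deletion pattern simultaneously defeats a linear-in-$k$ number of indices. Either ingredient would then plug into the indistinguishability framework of Blocki et al.~\cite{BlockiCGLZZ21} and the machinery used to prove Theorem~\ref{thm:main} to close the argument.
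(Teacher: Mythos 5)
The statement you are addressing is Conjecture~\ref{conj:main}, which the paper explicitly leaves open (``which we hope to see proven or disproven in future work''); there is no proof in the paper to compare against, and your text is a research plan rather than a proof. Judged on its own terms, the plan has concrete gaps. The central one is the amortization step: from ``each codeword position belongs to at most $B$ decoding cores'' you can only conclude $\sum_i |S_i| \le B M$, not $\sum_i |S_i| = O(k)$, and $M$ is unbounded in terms of $\eps$, $|\Sigma|$, $k$ (indeed \cite{BlockiCGLZZ21} shows $M$ must be at least exponential in $n$ if such a code exists). Since the decoder makes $k$ queries for each of the $n$ indices, the natural bound on the total query mass is $nk$, and nothing in your sketch forces the cores of distinct indices to overlap or to be few in number; the claimed conclusion that some core is ``essentially empty'' once $n > Ck$ therefore does not follow. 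Even granting an empty core, the final step is also unjustified: a query coordinate whose distribution places mass less than the core threshold on every single position can still concentrate all of its mass on positions correlated with $x[i]$, so ``empty core'' does not imply the decoder's output is independent of $x[i]$ without a smoothing argument, and smoothing is exactly where the deletion setting diverges from Katz--Trevisan.

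More fundamentally, you correctly identify the alignment problem---that a fixed received-string position corresponds to a wide, $\pi$-dependent range of codeword positions---as ``the main obstacle,'' but the proposal does not contain an idea for overcoming it; it defers to ``a structural classification'' or ``a new combinatorial tool'' to be found later. That obstacle is precisely what all of the machinery in Section~\ref{sec:kquery} (the layered deletion pattern, significant layers, and the representative family $\cF$) is built to handle, and it is also why the constant $C(\eps,k,|\Sigma|)$ obtained there has such poor dependence on $k$ (the parameter $\kappa$ already exceeds $\eps^{-k|\Sigma|}$, and the construction iterates towers of $\kappa$). So the proposal neither proves the conjecture nor supplies the new ingredient that would be needed; it restates the difficulty and attaches a counting argument that, as written, is quantitatively incorrect.
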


In other words, we conjecture that any LDC for deletions requires a number of queries directly proportional to the length of the original message to recover a single message symbol. We emphasize that this conjectured lower bound only applies when all queries to a candidate LDC must be specified non-adaptively.
%\footnote{This distinction is not important in the constant query setting of Theorem~\ref{thm:main} because any adaptive query LDC can be transformed into a non-adaptive one with an exponential blow-up in the number of queries.} We mention this distinction to clarify that Conjecture~\ref{conj:main} does not contradict prior results~\cite{OstrovskyP15, BlockBGKZ20} that demonstrate the existence of $\polylog(n)$ query deletion LDC's in the adaptive setting.
When the queries can be submitted adaptively, prior results~\cite{OstrovskyP15, BlockBGKZ20} demonstrate that $\polylog(n)$ queries is achievable.

Theorem~\ref{thm:main} also implies that locally correctable codes do not exist in the deletion setting. The implication follows from a result in \cite{BlockiCGLZZ21}, which states that the existence of LCC's for insertions/deletions would imply LDC's for insertions/deletions. We remark that they present this reduction for adversarial codes allowing both insertions and deletions, but their proof also works in the deletion-only setting.
%Next, we show the analogous result to Theorem~\ref{thm:main} for LCC's, that they do not exist in the oblivious deletion setting. We remark that~\cite{BlockiCGLZZ21} shows a black-box result that the non-existence of deletion LDC's implies the non-existence of deletion LCC's. However, their result is stated only for the adversarial setting, so we prove it directly in the oblivious setting.

\begin{restatable}{corollary}{mainlcc}\label{cor:lcc}
    For any $\eps>0$, $k\in \bbN$, and alphabet $\Sigma$, there exists a constant $C:=C(\eps, k, |\Sigma|)$ such that for all $n>C$, there is no $k$-query deletion LCC $\C: \Sigma^n \to \Sigma^M$ for any $M$ that is resilient to $\eps$-fraction of adversarial deletions.
\end{restatable}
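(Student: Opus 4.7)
The plan is to derive Corollary~\ref{cor:lcc} as an immediate consequence of Theorem~\ref{thm:main} together with the black-box reduction from LCC's to LDC's given by Blocki et al.~\cite{BlockiCGLZZ21}. Concretely, \cite{BlockiCGLZZ21} shows that if $\C:\Sigma^n\to\Sigma^M$ is a $k$-query LCC resilient to an $\eps$-fraction of insertions/deletions, then there is a $k'$-query LDC $\C':(\Sigma')^{n'}\to(\Sigma')^{M'}$ resilient to an $\eps'$-fraction of insertions/deletions, where $k'$, $|\Sigma'|$, and $\eps'$ depend only on $k$, $|\Sigma|$, and $\eps$, and $n'\to\infty$ as $n\to\infty$. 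Given this reduction, the contrapositive of Theorem~\ref{thm:main} finishes the argument: any sufficiently long constant-query deletion LCC would yield a long constant-query deletion LDC, contradicting Theorem~\ref{thm:main}.

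The first step I would carry out is to recall the reduction from \cite{BlockiCGLZZ21} and record the precise dependence of $(k',|\Sigma'|,\eps',n')$ on $(k,|\Sigma|,\eps,n)$, so that the constant $C(\eps,k,|\Sigma|)$ promised in the statement of Corollary~\ref{cor:lcc} can be obtained from the constant $C(\eps',k',|\Sigma'|)$ given by Theorem~\ref{thm:main}. The second step is to verify that the reduction goes through when the adversarial channel is restricted to deletions only; this is noted informally in the introduction, and it amounts to checking that (i) the simulating adversary built by the reduction performs only deletions whenever the original adversary does, and (ii) the constructed LDC decoder only needs to handle a deletion channel rather than an insertion/deletion channel. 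The final step is to chain the two results: if a deletion LCC with the parameters forbidden by Corollary~\ref{cor:lcc} existed, apply the reduction to obtain a deletion LDC with parameters forbidden by Theorem~\ref{thm:main}, yielding a contradiction.

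There is essentially no conceptual obstacle here, since both the LCC-to-LDC transformation and the LDC lower bound are already in hand. The only genuine bookkeeping is to confirm that the deletion-only version of the \cite{BlockiCGLZZ21} reduction indeed produces a code whose alphabet size, query complexity, and corruption tolerance are all functions of the original parameters alone, independent of the original message length $n$. Once this is confirmed, the proof of Corollary~\ref{cor:lcc} is a one-line invocation of Theorem~\ref{thm:main}.
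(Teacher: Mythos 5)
Your proposal is correct and follows essentially the same route as the paper: invoke the LCC-to-LDC reduction of Blocki et al., check that it survives restriction to the deletion-only channel, and then apply the contrapositive of Theorem~\ref{thm:main}. The paper's only additional content is the specific observation needed for the deletion-only check — namely that distinct LCC codewords must still be far apart in Hamming distance, since a deletion adversary can delete all positions where two close codewords differ — after which the VC-dimension argument of \cite{BlockiCGLZZ21} yields a systematic sub-code that is an LDC on messages of length $O_{\eps,|\Sigma|}(n)$, exactly the parameter bookkeeping you anticipated.
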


% Finally, we show that strong relaxed LDC's do not exist in the adversarial deletion setting. Relaxed LDC's are strictly weaker than LDC's and allow for some probability of returning a decoding failure answer $\perp$, but a significant proportion of indices must remain decodable. We refer the reader to Definition~\ref{def:srldc} for a formal definition.

% \begin{restatable}{theorem}{mainsrldc}\label{thm:srldc}
%     For any $\eps>0$, $k\in \bbN$, and alphabet $\Sigma$, there exists a constant $C:=C(\eps, k, |\Sigma|)$ such that for all $n>C$, there is no $k$-query deletion SRLDC $\C: \Sigma^n \to \Sigma^M$ for any $M$ that is resilient to $\eps$-fraction of adversarial deletions.
% \end{restatable}

%One can conjecture an analog of Conjecture~\ref{conj:main} for LCC's.

\subsection{Related Works}

\paragraph{Locally decodable codes for insertion/deletions.} Locally decodable codes for insertions and deletions were introduced by Ostrovsky and Paskin-Cherniavsky in~\cite{OstrovskyP15}. They present deletion LDC's of length $\tilde{O}(n)$, resilient against a constant fraction of adversarial deletions with $\polylog(n)$ queries. Cheng, Li, and Zheng~\cite{ChengLZ20} introduce locally decodable codes with randomized encoding and demonstrate improved rate-query tradeoffs in this setting. \cite{BlockBGKZ20} replicate \cite{OstrovskyP15}'s results using different techniques and conjecture the non-existence of deletion LDCs of any length in the constant query regime.

As previously mentioned, Blocki et al.~\cite{BlockiCGLZZ21} make progress on this conjecture, proving that any candidate LDC must have exponential length relative to $n$. More specifically, they eliminate the possibility of $2$-query linear deletion LDCs and establish an exponential lower bound for the size of any $2$-query deletion LDC. In general, they prove a lower bound of $\exp(n^{1/O(k)})$ for a candidate $k$-query LDC. 
%Like ours, their results hold with an oblivious adversary.

~\cite{BlockBCKGLZ22} introduces the relaxed local decoding model (RLDC's) to the insertion/deletion setting, which allows limited decoding failure answers $\perp$. They show lower bounds for constant query ``strong'' RLDC's and constant query constructions for ``weak'' RLDC's. A few other models for insertion/deletion LDC's have been explored, including the secret key setting in~\cite{BlockB21, BlockiCGLZZ21}.  

\paragraph{Insertion/deletion codes.}

The study of codes that correct insertions and deletions was initiated by~\cite{Levenshtein66}. Correcting for insertions and deletions is generally more challenging than substitutions, and our understanding of the field remains limited. Nevertheless, efficient constructions of constant rate codes that correct a constant fraction of insertion/deletion errors have been shown, initially in~\cite{SchulmanZ99}. More recently, the work of~\cite{HaeuplerS17} introduces synchronization strings, which transform substitution error-correcting codes into insertion/deletion error-correcting codes in an efficient black-box manner. The optimal error resilience of constant rate insertion/deletion codes was studied by~\cite{GuruswamiHL22}.

Many other aspects of insertion/deletion codes have been explored. For more details, we refer to surveys on this topic, including~\cite{Mitzenmacher09, HaeuplerS21}.

\paragraph{LDC's and LCC's for substitutions.}

Locally decodable codes and locally correctable codes were initially introduced in the classical setting of substitution errors. The ideas behind LDC's and LCC's arose as early as Reed-Muller codes~\cite{Reed54, Muller54} and were first formalized by Katz and Trevisan~\cite{KatzT00}. They have applications in many areas, including in PCP's~\cite{Babai91}, private information retrieval~\cite{Chor98}, and average case complexity~\cite{Trevisan04}.

The best constructions of LDC's and LCC's for constant queries are barely sub-exponential in the length of the message~\cite{Efremenko09}, while lower bounds are polynomial for $k>3$ queries~\cite{KatzT00, WehnerD05, Woodruff07, Alrabiah23}, except in the case of 2-query LDC's and LCC's~\cite{GoldreichKST02, KerenidisD03} and linear 3-query LCC's~\cite{Kothari23}. The question of whether constant query LDC's with polynomial message length exist remains open. We refer the reader to~\cite{Yekhanin12} for a survey of the work on LDC's.
\section{Technical Overview} \label{sec:overview}

In this section, we'll overview our main result that constant query deletion LDC's do not exist. We'll restrict our attention to the case where the alphabet $\Sigma=\{0,1\}$ as the proof is the same for any constant-sized alphabet.

Before delving into the details, let's set up the problem. Given a sufficiently small $\eps>0$ and a constant number of queries $k$, our goal is to show the existence of a constant $C(\eps,k)$ such that for all $n>C(\eps,k)$, there is no function $\C: \{0,1\}^n \to \{0,1\}^M$ for any value of $M$, which satisfies local decoding for deletions. That is, the adversary has a randomized corruption process $\C(x) \to \widehat{\C(x)}$ using at most $\eps$-fraction corruption with the following guarantee. For any candidate local decoding algorithm that non-adaptively queries $k$ indices of $\widehat{\C(x)}$ to guess $x[i]$, there exists an index $i$ and some $x$ for which it fails with probability greater than $\frac12-\eps$.

Throughout the overview, we use $\widehat{q}$ to denote the index of a query in the corrupted codeword $\widehat{\C(x)}$, while $q$ represents its corresponding index in the original uncorrupted codeword $\C(x)$. For convenience, we'll use $\tau$ to represent a generic constant much smaller than $\eps$ (around $\poly(\eps)$).

\subsection{Reduction to Simulating All Queries with a Fixed Constant-Sized Set}

Our main goal will be to show the following: the adversary can perform deletions in such a way that there is a \emph{constant-sized} set $\cQ$ of lists of queries $(\widehat{\sq_1}\ldots \widehat{\sq_k})$ such that any pair of queries $(\widehat{q_1}\ldots \widehat{q_k})$ can be simulated by a corresponding pair of queries $(\widehat{\sq_1}\ldots\widehat{\sq_k}) \in \cQ$. That is, for most $x$, the distribution over $\{0,1\}^k$ of the output of queries $(\widehat{q_1}\ldots\widehat{q_k})$ is approximately the same (TV distance $<\tau$) as the output of $(\widehat{\sq_1}\ldots\widehat{\sq_k})$ over the randomness of the corruption process.

Pictorially, we want to construct $\cQ$ so that for any list of queries $(\widehat{q_1}\ldots\widehat{q_k})$, the following two processes yield roughly the same distribution of outputs for most $\C(x)$. 
\begin{enumerate} [label=(\arabic*)]
    \item Perform the randomized deletion pattern, then query $(\widehat{q_1}\ldots\widehat{q_k})$ so that the induced indices of the queries in the original codeword are $(q_1\ldots q_k)$, and receive as output the values $\C(x)[q_1]\ldots \C(x)[q_k]$.
    \item Choose the representative query list $(\widehat{\sq_1}\ldots\widehat{\sq_k}) \in \cQ$ that approximates $(\widehat{q_1}\ldots\widehat{q_k})$. Then, perform the randomized deletion pattern, query $(\widehat{\sq_1}\ldots\widehat{\sq_k})$ so that the induced indices of the queries in the original codeword are $(\sq_1\ldots \sq_k)$, and receive as output the values $\C(x)[\sq_1]\ldots \C(x)[\sq_k]$.
\end{enumerate}

\begin{center}
\begin{tikzpicture}
    \tikzset{every node/.style={minimum width=3cm,minimum height=0.8cm,thick}}
    
    \node[shape=rectangle,draw] (A) at (0,0) {$\widehat{q_1}\ldots\widehat{q_{k}}$};
    \node[shape=rectangle,draw] (B) at (4,0) {$q_1\ldots q_k$};
    \node[shape=rectangle,draw] (C) at (8,0) {output $\in \{0,1\}^k$};

    \node[shape=rectangle,draw] (D) at (0,-2) {$\widehat{\sq_1}\ldots\widehat{\sq_{k}}$};
    \node[shape=rectangle,draw] (E) at (4,-2) {$\sq_1\ldots \sq_k$};
    \node[shape=rectangle,draw] (F) at (8,-2) {output $\in \{0,1\}^k$};

    \node[] (G) at (8,-1) {$\approx$};
    
    \path [->] (A) edge[thick] node {} (B);
    \path [->] (B) edge[thick] node {} (C);
    \path [->] (D) edge[thick] node {} (E);
    \path [->] (E) edge[thick] node {} (F);
    \path [->] (A) edge[thick] node {} (D);
\end{tikzpicture}
\end{center}

%As a preview, one reason we can make such an argument is that two nearby queries, for instance $i$ and $i+1$, have a similar distribution of indices of the original codeword $\C(x)$ they correspond to, provided the adversary's randomized deletion process is sufficiently smooth. Therefore, they produce similar output distributions, so they simulate one another. This type of argument is unique to the deletion setting since it relies on nearby queries having similar outputs post-corruption, which is not true in the substitution setting.

If we can successfully establish this statement, then any local decoding algorithm can be emulated by one that only queries lists in $\cQ$. Consequently, we can conclude that constant query deletion LDCs do not exist, as querying a fixed constant-sized list provides only a constant amount of information about $x$, while the number of choices for the index $i$ is super-constant. A similar information-theoretic argument also shows that constant query deletion LCC's do not exist.

For the remainder of this overview, we will show how to design the adversary's corruption process and find this constant-sized $\cQ$. We will begin with the $2$-query case which illustrates many of the ideas, and then we will describe how to modify the argument for the general $k$-query case.

\subsection{The $\mathbf{2}$-Query Case}

\subsubsection{The Corruption Process.} We'll begin by defining the corruption process used by the adversary. This is the same process used in the work~\cite{BlockiCGLZZ21}. The adversary performs two types of corruptions to $\C(x)$.\footnote{We remark that the formal deletion process described in Section~\ref{sec:del-pattern} even when restricted to the $k=2$ case is different and more complex, but that this simpler process used in the overview permits the same analysis specifically for $k=2$.}
\begin{itemize}
    \item \textbf{Type 1:} Choose $\eps_1\in [0,\frac{\eps}{2}]$ uniformly at random and delete  the first $\eps_1M$ bits of $\C(x)$.
    \item \textbf{Type 2:} Choose $\eps_2\in [0,\frac{\eps}{2}]$ uniformly at random and delete every $\frac{1}{\eps_2}$'th remaining bit.
\end{itemize}

\subsubsection{Constructing $\cQ^{(\ell)}$ for Each Layer $\ell$.}

We'll first attempt to construct $\cQ$ so that it satisfies an even stronger guarantee than that the output in $\{0,1\}^2$ of any pair $(\widehat{q_1},\widehat{q_2})$ can be approximated by some pair $(\widehat{\sq_1},\widehat{\sq_2}) \in \cQ$. We'll have $\cQ$ contain, for every possible pair of queries $(\widehat{q_1},\widehat{q_2})$, a corresponding pair $(\widehat{\sq_1},\widehat{\sq_2}) \in \cQ$ such that the distribution of $(q_1,q_2)$ is similar to $(\sq_1, \sq_2)$. This is essentially what~\cite{BlockiCGLZZ21} do, and the argument we outline below is a reformulation of theirs.

We will show that if $\cQ$ contains $\widehat{\sq_1} \in \big[\widehat{q_1}\pm \tau M\big]$ and $\widehat{\sq_2} \in \big[\widehat{q_2}\pm \tau (\widehat{q_2}-\widehat{q_1})\big]$, then the statement holds true. With just Type 1 deletions, shifting a query pair $(\widehat{q_1},\widehat{q_2})$ by $(\tau M,\tau M)$ maintains the distribution of the $q_1$ and $q_2$. This is because the small difference in the indices queried is counteracted by the unpredictability of $\eps_1$. Similarly, Type 2 deletions counteract a small change in the distance between $\widehat{q_1}$ and $\widehat{q_2}$, so they can be shifted by $(0,(\widehat{q_2}-\widehat{q_1})\tau)$ without changing the output by much.

%It will turn out that we can find a deletion process such that the ``unpredictability'' (how large the range of plausible values is) of $q_1$ (the induced index of $\widehat{q_1}$ into the original codeword) proportional to $M$, so $\sq_1$ will have a similar distribution since $\widehat{\sq_1}$ differs from $\widehat{q_1}$ by at most $\tau M$. Since the indices $q_1$ and $q_2$ are related, the unpredictability of $q_2$ will only be proportional to $(\widehat{q_2}-\widehat{q_1})$ rather than $M$.

Notice that a "random" deletion process, where every index is deleted with probability $\eps/2$, doesn't work here, because the unpredictability of the index $q_1$ is proportional only to $\sqrt{q_1}$ and not to $M$. Even if the deletion fraction is chosen randomly instead of being $\eps/2$, the unpredictability of $q_2$ conditioned on $q_1$ is only proportional to $\sqrt{\widehat{q_2}-\widehat{q_1}}$ and not to $(\widehat{q_2}-\widehat{q_1})$.

%Instead, the adversary will use the following deletion process. \mnote{put the deletion process first, cite them}

Now, we want to construct $\cQ$ so that it has a pair $\widehat{\sq_1} \in \big[\widehat{q_1}\pm \tau M\big]$ and $\widehat{\sq_2} \in \big[\widehat{q_2}\pm \tau (\widehat{q_2}-\widehat{q_1})\big]$ for every $(\widehat{q_1},\widehat{q_2})$. For a fixed ``layer'' $\ell$ of queries -- that is queries satisfying $\ell:=\lfloor \log(\widehat{q_2}-\widehat{q_1}) \rfloor$, only a constant number of queries need to be added to $\cQ$ to approximate all queries in this layer. This is because only one option for $\widehat{q_1}$ is needed in each $\tau M$-length range, and only one option for $\widehat{q_2}-\widehat{q_1}$ in each $\tau \ell$-length range. To be explicit, any such pair of queries can be approximated by one of the form $(\tau Mi, \tau Mi + \tau 2^aj)$ for $i\in [\tau^{-1}], j\in [\tau^{-1}]$. We'll add all queries of this form into the set $\cQ^{(\ell)}$.

%This is because the distribution of corresponding indices in the original message is very similar since they are counteracted by the random choice of $\eps_1$. Similarly, Type 2 deletions counteract a small change in the distance between $\widehat{q_1}$ and $\widehat{q_2}$. More precisely, $(\widehat{q_1},\widehat{q_2})$ has a similar output distribution as $(\widehat{q_1},\widehat{q_2} + (\widehat{q_2}-\widehat{q_1})\tau)$. Combining the effects of both types of deletions, the query $(\widehat{q_1},\widehat{q_2})$ induces the same output as the query $(\widehat{q_1}+\tau_1 M,\widehat{q_2})+\tau_1 M + \tau_2 (\widehat{q_2}-\widehat{q_1})$ for $\tau_1,\tau_2 \ll \eps$.

%We categorize the queries into layers based on the approximate value of $(\widehat{q_2}-\widehat{q_1})$. Formally, we define $\ell:=\lfloor \log(\widehat{q_2}-\widehat{q_1}) \rfloor$. The argument above shows that for a fixed layer $\ell$ (that is, all queries satisfying $\ell=\lfloor \log(\widehat{q_2}-\widehat{q_1}) \rfloor$), there are constantly many queries that model all possible queries. To be explicit, any such query can be approximated by one of the form $(\tau Mi, \tau Mi + \tau 2^aj)$ for $i\in [\tau^{-1}], j\in [\tau^{-1}]$. We'll add all queries of this form into the set $\cQ^{(\ell)}$.

If there were only a constant number of layers, that is if $\log{M}$ were a constant, then we would be done, as all queries can be approximated by queries in $\bigcup_{\ell\in [\log{M}]} \cQ^{(\ell)}$. Unfortunately, $\log{M}$ need not be a constant, and in fact will certainly depend on $n$. The main goal of the rest of our argument is to show that we only need to include $\cQ^{(\ell)}$ for a \emph{constant number} of layers $\ell$, and these alone will approximate \emph{all} queries. This constant-sized set of layers will be represented as $\cL$.

We remark that this argument suffices for the bound in~\cite{BlockiCGLZZ21}. It is actually enough that $|\cQ|=o(n)$ to achieve the information-theoretic contradiction in the previous section, so they establish a contradiction whenever $\log(M) = o(n)$, which yields their exponential lower bound.

\subsubsection{A Constant Number of Layers $\ell$ Suffice.}

If we let $\cQ=\bigcup_{\ell\in [\log{M}]} \cQ^{(\ell)}$, it is not only true that $(\widehat{q_1},\widehat{q_2})$ shares the same distribution of outputs in $\{0,1\}^2$ as their representative queries in $\cQ$, but they also correspond to a similar distribution on indices of the uncorrupted codeword $\C(x)$. As explained earlier, this is stronger than necessary. We'll identify layers $\ell$ that cen be excluded because they induce the same output distribution as other layers, even if not the same distribution on indices of the uncorrupted codeword $\C(x)$.

\paragraph{An illustrative example.} Let's with an example. Look at the following string:
\[
    \overbrace{\underbrace{00\ldots0}_{\sqrt{n}\text{ 0's}}~\underbrace{11\ldots1}_{\sqrt{n}\text{ 1's}} \longldots \underbrace{00\ldots0}_{\sqrt{n}\text{ 0's}}~\underbrace{11\ldots1}_{\sqrt{n}\text{ 1's}}}
    ^{\sqrt{n}\text{ blocks}}
\]

In layers where $\widehat{q_2}-\widehat{q_1} \ll \sqrt{n}$, any two queries $(\widehat{q_1},\widehat{q_2})$ have a very high probability of being identical. This is because the random shift from the Type 1 deletions mean that $q_1$ (the induced index of $\widehat{q_1}$ in the original codeword) is fairly uniform within the block it falls into, making it likely that $q_2$ is in the same block. Thus, slightly changing the gap between the queries does not significantly affect the output distribution, and hence, the set $\cQ^{(1)}$ accounts for all $\cQ^{(\ell)}$ with $2^\ell \ll \sqrt{n}$.

A similar argument works for queries where $\widehat{q_2}-\widehat{q_1} \gg \sqrt{n}$. In this case, $q_1$ and $q_2$ will be in different blocks with many blocks between them. In particular, Type 2 errors make it approximately equally likely for there to be an odd or even number of blocks between $\widehat{q_1}$ and $\widehat{q_2}$, and so their values are uncorrelated. Since this is true for all layers where $2^\ell\gg \sqrt{n}$, any single $\cQ^{(\ell)}$ correctly models all the queries in these layers.

Hence, to model all queries, $\cL$ would need only to include $\ell=1$, one value of $\ell$ much larger than $\log\sqrt{n}$, and a few values of $\ell$ near $\log{\sqrt{n}}$. A more detailed analysis shows that only constantly many layers of the third type are necessary. This results in a constant sized union $\cQ$.

Surprisingly, this phenomenon is not limited to the specific example shown; most strings exhibit a similar behavior. For a general string, consider the fraction of 0's and 1's in each block of size $2^\alpha$ and $2^\beta$ for some $\beta \gg \alpha$.
\[
    \overbrace{
    \underbrace{0111\longldots}_
    {\mathclap{\small{\begin{aligned}&~~~\text{len } 2^\alpha \\ &~~~p^{(\alpha)}_1 \text{ frac 0's}\end{aligned}}}}
    ~\underbrace{1011\longldots}_{} \longldots \underbrace{0100\longldots}_{}~\underbrace{1110\longldots}_
    {\mathclap{\small{\begin{aligned}&~~~~~\text{len } 2^\alpha \\ &~~~~~p^{(\alpha)}_{2^{\beta-\alpha}} \text{ frac 0's}\end{aligned}}}}
    }
    ^{\small{\begin{aligned}&\text{length } 2^\beta \\ &p^{(\beta)}_1 \text{ fraction 0's}\end{aligned}}}
    ~~\overbrace{\vphantom{0}\longestldots~~~~~~~~~~~}
    ^{\small{\begin{aligned}&\text{length } 2^\beta \\ &p^{(\beta)}_2 \text{ fraction 0's}\end{aligned}}}
    ~~\longestldots
\]

The main claim is as follows. Assume that most blocks of size $2^\alpha$ within the same $2^\beta$ chunk $i$ have approximately the same fraction of 0's (i.e. $p^{(\alpha)}_{2^\beta i+j}$ is similar for all $j\in [2^{\beta-\alpha}]$, which also implies all $p^{(\alpha)}_{2^\beta i+j}$ are approximately equal to the mean $p^{(\beta)}_i$). Then only a single layer $\ell \in [\alpha +\tau^{-1},\beta -\tau^{-1}]$ needs to be included in $\cL$ (within this layer, it holds that $2^\alpha \ll \widehat{q_2}-\widehat{q_1} \ll 2^\beta$).

To illustrate this, consider a pair of queries $(\widehat{q_1},\widehat{q_2})$. Conditioned on $q_1$ falling within a specific size $2^\beta$ chunk $i$ (and $q_2$ will likely fall in the same chunk since $\widehat{q_2}-\widehat{q_1} \ll 2^\beta$), we'll show that the distribution of the output of queries $(\widehat{q_1},\widehat{q_2})$ is $\text{Bern}(p^{(\beta)}_i)^2$.\footnote{We use $\text{Bern}(p)$ for the distribution on $\{0,1\}$ with probability $p$ of being $0$ and probability $1-p$ of being $1$.} Because $\widehat{q_2}-\widehat{q_1} \gg 2^\alpha$, the Type 2 errors cause enough unpredictability in $q_2-q_1$ that both $q_1$ and $q_2$ can be considered essentially independent and uniformly distributed within the size $2^\alpha$ chunk that each falls into (even though which size $2^a$ each query falls into can be correlated). We've assumed that each $2^\alpha$ block has distribution $\text{Bern}(p^{(\beta)}_i)$ for a query made randomly to it, so the joint distribution on $(\widehat{q_1},\widehat{q_2})$ is $[\text{Bern}(p^{(\beta)}_i)]^2$. 

Consequently, the output is essentially the same for all $\ell \in [\alpha +\tau^{-1},\beta -\tau^{-1}]$ within the $2^\beta$-sized block of the original codeword that the queries fall into. In particular, any query $(\widehat{q_1}, \widehat{q_1}+D)$ exhibits almost an identical same output distribution as long as $\alpha +\tau^{-1}< \log D < \beta -\tau^{-1}$. Therefore, it suffices to represent all of the queries in a single one of the layers $\ell$ rather than all of them, which is adequately done by one $\cQ^{(\ell)}$.

\paragraph{A structure lemma.} It may seem like a strong assumption that most blocks of size $2^\alpha$ within the same $2^\beta$ chunk have approximately the same fraction of 0's, but it is indeed the case that any string has only constantly many layers at which this fraction changes. Specifically, we'll identify a list $\alpha_1 < \ldots < \alpha_{\tau^{-1}}$ such that most blocks of size $2^{\alpha_i}$ within the same $2^{\alpha_{i+1}}$ chunk have approximately the same fraction of 0's. Then, only one layer between $\alpha_i+1$ and $\alpha_{i+1}$ needs to be included in $\cL$ since they all then $\cQ^{(\ell)}$ induce the same distribution of outputs.\footnote{This is not quite true; actually a constant number of extra layers right around $\alpha_i+1$ and $\alpha_{i+1}$ need to be included.}

Consider the variance of $p^{(\alpha)}$ within a size $2^\beta$ chunk (for simplicity of notation let's focus on the first size $2^\beta$ chunk).
\[
    \mathop{\text{Var}}_{i\sim [2^{\beta-\alpha}]}\Big(p^{(\alpha)}_i\Big) = \mathop{\bbE}_{j\sim [2^{\beta-\alpha}]} \Big[(p^{(\alpha)}_j)^2 \Big] - \Big(p^{(\beta)}_1\Big)^2.
\]
We want to characterize the ranges in which this variance is large. For any $\theta \in [\log{M}]$, define the weight $\cW_\theta:= \mathop{\bbE}_{i\sim [M/2^\theta]} \big[p^{(\theta)}_i \big]$. Then, averaging the above variance (of $2^\alpha$ sized blocks) over all the sized $2^\beta$ chunks gives
\begin{align*}
    \mathop{\bbE}_{i\in [M/2^\beta]} \Bigg[ \mathop{\text{Var}}_{j\sim [2^\beta i+1, 2^\beta i + 2^{\beta-\alpha}]}\Big(p^{(\alpha)}_j\Big) \Bigg] &= \mathop{\bbE}_{i\in [M/2^\beta]} \Bigg[ \mathop{\bbE}_{j\sim [2^\beta i+1, 2^\beta i + 2^{\beta-\alpha}]} \Big[(p^{(\alpha)}_j)^2 \Big] - (p^{(\beta)}_i)^2 \Bigg]. \\
    &=\cW_\alpha-\cW_\beta
\end{align*}

In simpler terms, we have found a weight function on layers $\cW_\theta$ such that the variance of the fraction of $0$'s in a sized $2^\alpha$ block in a $2^\beta$ chunk is $W_\alpha-W_\beta$. Moreover, this weight $\cW_\theta$ decreases as $\theta$ increases and always falls between $0$ and $1$. Therefore, the string can be partitioned into a constant length list $\alpha_1\ldots \alpha_{\tau^{-1}}$ such that for all $i$, $\cW_{\alpha_{i+1}}-\cW_{\alpha_i+1} \ll \eps$ and so only a single layer in each interval needs to be included.

\paragraph{Reconciling with $\C$.} Recall that $\cQ$ needs to be a function of the code $\C$, not each individual string the sender could send. To deal with this, one can instead use $\cW_\theta(\C):=\bbE_{z\in \C} \big[ \cW_\theta(z) \big]$, and find the relevant $\alpha_1\ldots \alpha_{\tau^{-1}}$ according to this weighting. Then, on most strings $z\in \text{Im}(\C)$, layers between adjacent elements of the sequence will induce the same output distribution, which is enough for the argument to work.

\subsection{The \textit{k}-Query Case}

Next, we'll extend this argument to the $k$-query case.

\subsubsection{Naive Attempt to Extend 2-Query Argument.} 
Ideally, the same corruption pattern and a similar analysis would still apply. We had identified a sequence of critical transition layers $\alpha_1\ldots \alpha_{\tau^{-1}}$ using the weight function defined above, and based on this, pinpointed a constant-sized list of layers $\cL$ that adequately approximated all pairs of queries. We can attempt to extend this by defining $\cQ^{(\ell_1\ldots \ell_{k-1})}$ for $\ell_1\ldots\ell_{k-1} \in \cL$ as follows: choose $j, i_1\ldots i_{k-1} \in [\tau^{-1}]$ and include each tuple of the form $\big(j\tau M, j\tau M+ i_12^{\ell_1},\ldots, j\tau M + i_12^{\ell_1}+\ldots+i_{k-1}2^{\ell_{k-1}}\big)$ in $\cQ^{(\ell_1\ldots \ell_{k-1})}$.

To carry through the $2$-query analysis, we would need to show the following two claims:
\begin{enumerate} [label=(\arabic*)]
    \item\label{itm:fix1} Any $(\widehat{q_1},\ldots, \widehat{q_k})$ satisfying $\widehat{q_{i+1}}-\widehat{q_i}\approx 2^{\ell_i}$ induces the same distribution on $(q_1,\ldots, q_k)$ as some list in $\cQ^{(\ell_1\ldots \ell_{k-1})}$.
    \item\label{itm:fix2} Any successive difference between queries $(\widehat{q_{i+1}}-\widehat{q_i})$ that does not correspond to a layer in $\cL$ is suitably approximated by a layer in $\cL$.
\end{enumerate}
Unfortunately, neither of these claims is true as stated. 

\subsubsection{Addressing the First Issue.} 
To see this issue with \ref{itm:fix1}, consider the following triples of queries for some fixed $q,d$ where $d\approx 2^\ell$ for some layer $\ell \in \cL$: $(\widehat{q_1}:=q,\widehat{q_2}:=q+d,\widehat{q_3}:=q+2d)$ and $(\widehat{q_1'}:=q,\widehat{q_2'}:=q+d,\widehat{q_3'}:=q+2d+10)$. We want the distributions of $(q_1,q_2,q_3)$ and $(q_1',q_2',q_3')$ to be similar. However, since $\widehat{q_3}-\widehat{q_2}=\widehat{q_2}-\widehat{q_1}$, any induced indices $(q_1,q_2,q_3)$ always take on the form $(Q, Q+D, Q+2D)$ for some $Q, D$ since Type 2 deletions delete the same fraction of each equal-length interval. By similar logic $(q_1',q_2',q_3')$ takes on the form $(Q,Q+D,>Q+2D+5)$. Consequently, not only are the induced distributions of $(q_1,q_2,q_3)$ and $(q_1',q_2',q_3')$ different, but they have no overlap, as the triples take on different forms.

The key issue here is that every interval of length $D$ in a codeword has the same number of bits deleted, while the desired behavior is to have a random, uncorrelated number of bits deleted. To achieve this, we'll modify the deletion process as follows: for every layer $\ell \in \cL$, in the $i$'th chunk of length $2^\ell$, choose a random fraction $\eps^{(\ell)}_i$ and delete every $\frac{1}{\eps^{(\ell)}_i}$'th bit. Since there are only a constant number of layers in $\cL$, this still results in a constant fraction of corruption overall. Now, each interval of length $\approx 2^\ell$ not only has a random fraction of bits deleted, but these fractions are uncorrelated. Consequently, the example from the previous paragraph would induce the same distribution on $(q_1,q_2,q_3)$ and $(q_1',q_2',q_3')$ as originally intended.

\subsubsection{Addressing the Second Issue.} 
The issue with~\ref{itm:fix2} is that $\cL$ doesn't actually encompass all the important layers. We will show that there are important transition layers that are not accounted for by the weight function $\cW$ defined previously. In the final argument, the number of transition layers will still be constant, but we need a new weight function to identify additional important layers.

\paragraph{An illustrative example.} Let's begin with an example that demonstrates why $\cW$ is inadequate.

\[
    \underbracket{\overbrace{\underbrace{00\ldots0}_{2^\alpha\text{ 0's}}
    ~\underbrace{11\ldots1}_{2^\alpha\text{ 1's}}
    ~\underbrace{00\ldots0}_{2^\alpha\text{ 0's}}
    ~\longldots}
    ^{\small{\begin{aligned}
        &~~\text{length }2^\beta \\
        &~~1/2\text{ frac 0's}
    \end{aligned}}}
    \vphantom{\underbrace{1}_{\frac12}}}
    _{\text{Type $A$ block}}
    ~~
    \underbracket{\overbrace{0101010101~\longldots}
    ^{\small{\begin{aligned}
        &~~\text{length }2^\beta \\
        &~~1/2\text{ frac 0's}
    \end{aligned}}}
    \vphantom{\underbrace{1}_{\frac12}}}
    _{\text{Type $B$ block}}
    ~~A~B~A~B~\longestldots
\]

In this example, the string consists of $M/2^\beta$ alternating Type $A$ and Type $B$ blocks. At each scale $\theta \in [\log M]$, let's examine the weight $\cW_\theta:= \mathop{\bbE}_{i\sim [M/2^\theta]} \big[p^{(\theta)}_i \big]$. For all $\theta>\alpha$, this quantity is $1/4$, so if $\cW$ were sufficient, differences between queries that are $\gg 2^\alpha$ would need to behave essentially the same. However, it is evident that the structure of the string chanes around the scale $2^\beta$. Two queries weren't sufficient to detect this change, but it turns out that four queries are.

To be concrete, choose $D_1$ and $D_2$ satisfying $2^\alpha \ll D_1 \ll 2^\beta$ and $D_2\gg 2^\beta$. Make the following two sets of queries: $(0, 1, D_1, D_1+1)$ and $(0, 1, D_2, D_2+1)$. The aim is for the distribution on outputs to be the same since $D_1$ and $D_2$ correspond to supposedly indistinguishable layers. However, when a pair of queries differing by $1$ falls in a type $A$ block, the outputs are typically the same, and when such a pair falls in a type $B$ block, the outputs are typically different. In the query $(0, 1, D_1, D_1+1)$, the first pair is likely in the same length $2^\beta$ block as the second pair (so whether the first pair are the same is correlated to whether the second pair are the same). However, in $(0, 1, D_2, D_2+1)$, the first and second pair are likely in different, uncorrelated blocks. Therefore, the distribution of outputs in $\{0,1\}^4$ for the queries $(0, 1, D_1, D_1+1)$ and $(0, 1, D_2, D_2+1)$ are different.

\paragraph{Defining more weight functions.} In this way, layer $\beta$ is a transition point where smaller layers look substantially different than larger ones, and it's not accounted for by $\cW$. Our new weighting needs to say something about what happens when you make \emph{two} queries to this region rather than just considering the fraction of 0's and 1's. Given any pair of queries $(\widehat{q_1},\widehat{q_2})$, define the weight:
\[
    \cW^{(00)}_\theta(\widehat{q_1},\widehat{q_2}) := \mathop{\bbE}
    _{\scriptsize{\begin{aligned}&i\sim [M/2^\theta] \\ &c\sim [2^\theta]\end{aligned}}} 
    \Big[\big(\text{$\Pr$ $(\widehat{q_1}+c,\widehat{q_2}+c)$ outputs $00$} \big)^2\Big].
\]
We similarly define $\cW^{(01)}$, $\cW^{(10)}$, and $\cW^{(11)}$, and if the alphabet $\Sigma$ were larger than binary, we would define $\cW^{(\sigma_1\sigma_2)}$ for all $\sigma_1,\sigma_2 \in \Sigma$, but we'll focus the discussion on $\cW^{(00)}$. In our earlier example, observe that layers smaller than $\beta$ have significantly higher weight than layers larger than $\beta$ for $\widehat{q_1}=0$ and $\widehat{q_2}=1$, so this successfully addresses the issue. Furthermore, for a fixed $(\widehat{q_1},\widehat{q_2})$, this weight is also (approximately) decreasing with respect to $\theta$, so it has a constant number of transition points. Consequently, we can use the constant-sized set of transition points for $\cW^{(00)}_\theta(\widehat{q_1},\widehat{q_2})$ and correspondingly add a constant number more layers into $\cL$ for each $(\widehat{q_1},\widehat{q_2})$. As it turns out, this new $\cL$ successfully accounts for all the relevant layers, at least when $k=3$. 

One issue remains: there are super-constantly many pairs $(\widehat{q_1},\widehat{q_2})$, but we can only add a constant number of layers to $\cL$. This is easily addressed by utilizing only a representative set of $(\widehat{q_1},\widehat{q_2})$ whose elements approximate all possible queries that could be made. This is exactly what the set $\cQ$ calculated for the $2$-query case represents (so we only need to use $\cW^{(00)}_\theta(\widehat{q_1},\widehat{q_2})$ for pairs of queries in the original $\cQ$).

In general, for each of the $k$ queries, we define an additional recursive layer of weightings. In this manner, we construct the set $\cL$ of relevant layers based on those relevant to any weighting, and include every corresponding $\cQ^{(\ell_1\ldots \ell_{k-1})}$. We refer the reader to Section~\ref{sec:kquery} for a comprehensive analysis. It is worth noting that the formal analysis in Section~\ref{sec:kquery} does not define $\cQ$ in in the same way, but the ideas remain the same.
\section{Preliminaries}

\subsection{Model Definitions}

We'll start with the definition of locally decodable codes (LDC's) that are resilient to $\eps$-fraction of adversarial deletions.

\begin{definition} [Deletion LDC]  \label{def:ldc}
    For a fixed value of $n$ and alphabet $\Sigma$, a candidate deletion LDC is an encoding $\C : \Sigma^n \to \Sigma^M$. We say that $\C$ is resilient to $\eps$-fraction of adversarial deletions as a $q$-query deletion LDC (for shorthand, just $k$-query $\eps$-resilient LDC) if the following holds. 

    For all $i \in [n]$, there exists an algorithm that makes $k$ queries\footnote{Arbitrary indices of $\widehat{\C(x)}$ can be queried, and if the index is out of bounds, the query will return $0$.} to a codeword $\C(x)$ adversarially\footnote{The adversary knows the encoding $\C$ and $x$ but not the decoder's algorithm.} corrupted by an adversary using less than $\eps M$ deletions and with probability at least $\frac12+\eps$ correctly guesses $x[i]$ for all values of $x \in \Sigma^n$.
\end{definition}

Next, we'll define locally decodable codes (LCC's) that are resilient to $\eps$-fraction of adversarial deletions.

\begin{definition} [Deletion LCC]  \label{def:lcc}
    For a fixed value of $n$ and alphabet $\Sigma$, a candidate deletion LCC is an encoding $\C : \Sigma^n \to \Sigma^M$. We say that $\C$ is resilient to $\eps$-fraction of adversarial deletions as a $q$-query deletion LCC (for shorthand, just $k$-query $\eps$-resilient LCC) if the following holds. 

    For all $i \in [M]$, there exists an algorithm that makes $k$ queries to a codeword $\C(x)$ adversarially corrupted by an adversary using less than $\eps M$ deletions and with probability at least $\frac12+\eps$ correctly guesses $\C(x)[i]$ for all values of $x \in \Sigma^n$.
\end{definition}

% Finally, we'll define strong relaxed locally decodables (SRLDC's) that are resilient to $\eps$-fraction of adversarial deletions. This model was introduced to the deletion setting by \cite{BlockBCKGLZ22}, who showed a nearly exponential lower bound against such codes. Our definition omits an additional requirement in their notion that the decoder must guarantee success whenever there are no corruptions. We define this in the adversarial model rather than oblivious to better parallel the model in \cite{BlockBCKGLZ22} which doesn't have a clear oblivious analog.

% \begin{definition} [Deletion SRLDC]  \label{def:srldc}
%     For a fixed value of $n$ and alphabet $\Sigma$, a candidate deletion strong LDC is an encoding $\C : \Sigma^n \to \Sigma^M$. We say that $\C$ is resilient to $\eps$-fraction of oblivious deletions as a $q$-query deletion SRLDC (for shorthand, just $k$-query $\eps$-resilient SRLDC) if the following holds. 

%     For all $i \in [n]$ there exists an algorithm that makes $k$ queries to $y\in \Sigma^M$ and outputs either $0,1$, or $\perp$ such that for all $y$ that can be attained by at most $\eps M$ deletions the following hold:
%     \begin{enumerate} [label=(\arabic*)]
%         \item With probability at least $\frac12+\eps$ the algorithm correctly guesses $x[i]$ or $\perp$ for all values of $x \in \Sigma^n$.
%         \item There exists a subset $I_y$ of indices of size $\geq \eps n$ such that on every index $i\in I$, the algorithm outputs $x[i]$ with probability at least $\frac12+\eps$.
%     \end{enumerate}
% \end{definition}

\subsection{Notation and Definitions}

\paragraph{Notation.}
\begin{itemize}
    \item The set $[n]$ denotes the set $\{1\ldots n\}$.
    \item Logarithms are in base $2$ unless otherwise specified.
    \item In general, messages will be denoted by $x$, encoded messages by $\C(x)$, and corrupted encoded messages by $\widehat{\C(x)}$.
    \item By default, if an algorithm makes queries $\widehat{q_1}\ldots \widehat{q_r}$, the queries will be non-adaptive and in increasing order (so $\widehat{q_r}>\dots > \widehat{q_1}$). The induced indices of $\widehat{q_1}\ldots \widehat{q_r}$ in the uncorrupted codeword are $q_1\ldots q_r$.
\end{itemize}

First, we define a (randomized) deletion pattern that the adversary can perform on a string with indices $1\ldots n$.

\begin{definition}[Deletion Pattern]
    A deletion pattern $\del$ on the indices $1\ldots \nu$ is a probability distribution over subsets of $1\ldots n$. Usually, in a string of length $\nu$, the bits at the indices corresponding to the subset will be deleted.
\end{definition}

For the context of our next definition, consider the situation after Bob has received a corrupted codeword and queries the indices $\widehat{q_1} \ldots \widehat{q_k}$. 

\begin{definition}[$\overline\cD(S)$]
    For a measurable space $S$, the family of probability distributions over $S$ is denoted by $\overline\cD(S)$.
\end{definition}

\begin{definition}[$\cD(\widehat{q_1},\ldots,\widehat{q_k}; \del)$]
For a deletion pattern $\del$ on strings of length $n$, given that Bob has queried $\widehat{q_1},\ldots,\widehat{q_k}$, let the distribution of $q_1,\ldots q_k$ (indices of the original string) be denoted $\cD:=\cD(\widehat{q_1},\ldots,\widehat{q_k}; \del)$.

For a distribution $\cD$ and string $z$ of length $n$, the distribution $z[\cD]$ denotes the distribution of $\Sigma^*$ obtained by indexing the tuple $\cD$ into the string $z$. If a queried index in $\cD$ are out of bounds for $z$ (i.e. not in the range $1\ldots n$), then instead return $\err$ (so the distribution also supports some error rather than only valid outputs in $\Sigma^r$.
\end{definition}

%We let the norm $\V D_1,D_2 \V_\TV$ denote the total variation distance between two probability distributions $D_1,D_2$ over a measurable space $T$. Next, we define the total variation distance between two functions from a finite set to a family of distributions. 

% \begin{definition} [TV Distance of Functions]
%     Let $D$ be the space of distributions over a measurable space $T$ and $S$ be a finite set. Given two functions $f_1,f_2: S \to D$, we define the total variation distance between $f_1,f_2$ as follows (where $U(S)$ denotes the uniform distribution over $S$):
%     \[
%         \V f_1,f_2 \V_\TV := \mathop{\bbE}_{s \sim U(S)} \Big[ \V f_1(s),f_2(s) \V_\TV \Big]
%     \]
% \end{definition}

Next, we'll define an easy way to view a string $z$ as a concatenation of chunks.

\begin{definition}[$C_{i,\nu}(z)$]
    Consider a string $z$ whose length is a multiple of $\nu$ (specifically, length $\alpha\nu$). For $i\in [\alpha]$, we define $C_{i,\nu}(z)$ to be the $i$'th length $\nu$ chunk of length of the string $z$. 
\end{definition}

% [old]
% Next, we'll define an easy way to view a string $z$ as chunks of length $2^t$.

% \begin{definition}[$C_t(z)$]
%     We define $C_t(z)$ to be the partition of $z$ into chunks of length $2^t$. More precisely, $z$ is partitioned into $2^{m-t}$ chunks of size $2^t$ labeled $C_t(z)[1]\ldots C_t(z)[2^{m-t}]$ (essentially, $C_t$ changes the indexing of $z$ from blocks of length $1$ to blocks of length $2^t$).
% \end{definition}
\section{Impossibility of \textit{k}-Query Deletion LDC's} \label{sec:kquery}

In this section, we establish the main theorem, that constant query locally decodable codes for deletion errors do not exist. 

\mainthm

Moreover, we establish that constant query LCC's do not exist.

\mainlcc

%\mainsrldc

We'll show these theorems by considering any candidate locally decodable code $\C$ and describing a corruption pattern for which some index $i$ cannot be recovered with only $k$ queries. Throughout this section, $\eps$, $k$ and $n$ are fixed, and we'll let $\kappa$ be the smallest power of $2$ greater than $\eps^{-k\cdot |\Sigma|}$. Let $\C$ denote the candidate LDC (encoding strings of length $n$ to length $M=\kappa^m$), $x$ denotes a message being sent, and $z$ typically is an arbitrary string of length $M$.

We remark that the assumption that $M$ is a power of $\kappa$ is almost without loss of generality. This is because any encoding is effectively padded with infinitely many $0$'s since out-of-bounds queries return $0$. To make the code's length a power of $\kappa$, simply increase the length of the code until the next power of $2$ larger than $M$ which at most doubles the total length. Then, any adversarial attack on the new padded encoding also works on the original (since deleting indices larger than $M$ doesn't change the string) and requires (at most) the same number of deletions, but since $M$ increased in length by a factor of up to $\kappa$, we will require error resilience $\eps\kappa^{-1}$ instead of $\eps$.

\subsection{Definitions and Useful Lemmas}

We begin with a few useful definitions and simple facts. Throughout, the length of an arbitrary string $\nu$ will be assumed to be a power of $\kappa$ (so that it divides $M$).

\subsubsection{Frequencies}

When discussing lists of queries, we'll usually be interested in the list of differences between successive queries rather than the queries themselves. In particular, for any list of queries $\widehat{q_1}\ldots \widehat{q_r}$, we'll define the list of differences $\widehat{d_1}\ldots \widehat{d_{r-1}}$ as $d_i:=\widehat{q_{i+1}} - \widehat{q_i}$. We'll often state lemmas about the lists of query differences rather than the queries themselves.

Our first definition describes the probability of the output being a specific pattern $b$ when queries with differences $\widehat{d_1}\ldots \widehat{d_{r-1}}$ are made to a string $s$ on which some probabilistic deletion pattern is performed. The queries will be randomly shifted in the string subject to the successive differences being $\widehat{d_1}\ldots \widehat{d_{r-1}}$.

\begin{definition}[$\freq{b}{s}{\widehat{d_1}\ldots \widehat{d_{r-1}}}{\del}$] \label{def:freq-kquery}
    We define the shift-invariant frequency $\freq{b}{s}{\widehat{d_1}\ldots \widehat{d_{r-1}}}{\del}$ of a pattern $b\in \Sigma^r$ in a string $s \in \Sigma^\nu$ on the query differences $\widehat{d_1}\ldots \widehat{d_{r-1}}$ with deletion pattern $\del$ on strings of length $\nu$ as follows. 
    
    For a shift $i \in 1\ldots \nu$, let $\cD_i:=\cD[i, i+\widehat{d_1}\ldots i+\widehat{d_1}\ldots \widehat{d_{r-1}};\del]$ Then,
    \[
        \freq{b}{s}{\widehat{d_1}\ldots \widehat{d_{r-1}}}{\del} := \mathop{\bbE}_{i\in [\nu]} \Big[ ~\bbP \big[ s(\cD_i)=b \big] ~\Big].
    \]
\end{definition}

Next, we define the error probability of querying a list with differences $\widehat{d_1}\ldots \widehat{d_{r-1}}$ in a string $s$. Specifically, there's some chance of outputting $\err$ because some of the indices in the queries are out of bounds when the shift is realized. 

\begin{definition}[$\err(s;\widehat{d_1}\ldots \widehat{d_{r-1}};\del)$] \label{def:err-kquery}
    The error probability $\err(s;\widehat{d_1}\ldots \widehat{d_{r-1}};\del)$ when querying $i, i+\widehat{d_1}\ldots i+\widehat{d_1}+\ldots +\widehat{d_{r-1}}$ for a uniformly random shift $i \in [\nu]$ on a string $s$ with deletion pattern $\del$ is the probability that at least one index is out of bounds (and so the query returns $\err$).
\end{definition}
% \begin{definition}[$\Weight{b}{z}{\widehat{q_1}\ldots \widehat{q_r}}{\del}$] \label{def:W-kquery}    
%     We define the shift-invariant weight $\Weight{b}{z}{\widehat{q_1}\ldots \widehat{q_r}; \del}$ of a pattern $b\in \Sigma^r$ with deletion pattern $\del$ on strings of length $\nu$ in a string $z \in \Sigma^*$ whose length is denoted $\alpha\nu$ on the queries $\widehat{q_1}\ldots \widehat{q_r}$ as follows. 
%     \[
%         \Weight{b}{z}{\widehat{q_1}\ldots \widehat{q_r}}{\del} := \mathop{\bbE}_{i\in [\alpha]} \Big[ \freq{b}{C_{i,\nu}}{\widehat{q_1}\ldots \widehat{q_r}}{\del}^2 \Big].
%     \]

%     \noindent Moreover, we define the cumulative weight of the code $\C : \Sigma^n \to \Sigma^M$ as follows:
%     \[
%         \CWeight{b}{\C}{\widehat{q_1}\ldots \widehat{q_r}}{\del} := \mathop{\bbE}_{x\in \Sigma^n} \left[ \Weight{b}{\C(x)}{\widehat{q_1}\ldots \widehat{q_r}}{\del} \right].
%     \]
% \end{definition}

\subsubsection{Approximation Equivalence of Queries}

Next, we'll define what it means for a list of query differences $\widehat{\sd_1}\ldots \widehat{\sd_{r-1}}$ to approximate another list of differences $\widehat{d_1}\ldots \widehat{d_{r-1}}$ for a deletion pattern $\del$.

\begin{definition}[$\delta$-Approximate Queries]
    For a deletion pattern $\del$ on a string $s$ of length $\nu$, a list of query differences $\widehat{\sd_1}\ldots \widehat{\sd_{r-1}}$ $\delta$-approximates another list of query differences $\widehat{d_1}\ldots \widehat{d_{r-1}}$ if $\widehat{\sd_1}+\ldots+\widehat{\sd_{r-1}} \leq \widehat{d_1}+\ldots+\widehat{d_{r-1}}$ and for all $b\in \Sigma^r$,
    \begin{align*}
        \begin{aligned}
        \Big| \freq{b}{s}{\widehat{\sd_1}\ldots \widehat{\sd_{r-1}}}{\del} - \freq{b}{s}{\widehat{d_1}\ldots \widehat{d_{r-1}}}{\del} \Big|& \\
        + \err(s;\widehat{\sd_1}\ldots \widehat{\sd_{r-1}};\del) + \err(s;\widehat{d_1}\ldots \widehat{d_{r-1}};\del)&
        \end{aligned} ~< \delta.
    \end{align*}

    If $s$ is length $\alpha\nu$ and $\del$ is a deletion pattern on strings of length $\nu$, then we say that $\widehat{\sd_1}\ldots \widehat{\sd_{r-1}}$ $\delta$-approximates $\widehat{d_1}\ldots \widehat{d_{r-1}}$ if
    \begin{align*}
        \mathop{\bbE}_{i \in [\alpha]} \left[ 
        \begin{aligned}
            \Big| \freq{b}{C_{i,\nu}(s)}{\widehat{\sd_1}\ldots \widehat{\sd_{r-1}}}{\del} - \freq{b}{C_{i,\nu}(s)}{\widehat{d_1}\ldots \widehat{d_{r-1}}}{\del} \Big|& \\
            + \err(C_{i,\nu}(s);\widehat{\sd_1}\ldots \widehat{\sd_{r-1}};\del) + \err(C_{i,\nu}(s);\widehat{d_1}\ldots \widehat{d_{r-1}};\del)&
        \end{aligned}
        \right] ~< \delta.
    \end{align*}
    In other words, on the average chunk of length $\nu$, $\widehat{\sd_1}\ldots \widehat{\sd_{r-1}}$ $\delta$-approximates $\widehat{d_1}\ldots \widehat{d_{r-1}}$.
\end{definition}

% \begin{definition}[Approximate Equivalence of Queries for a Code]
%     For a family of deletion pattern $\del$ on strings of different lengths and if the code $\C$ has length $\alpha\nu$, a list of queries $\widehat{\sq_1}\ldots \widehat{\sq_r}$ $\delta$-approximates another list of queries $\widehat{q_1}\ldots \widehat{q_r}$ if the following holds. For each deletion pattern, for a randomly chosen chunk of length $\nu$ of a randomly chosen codeword with probability $1-\frac\delta2$, it holds that $\widehat{\sq_1}\ldots \widehat{\sq_r}$ $\frac\delta2$-approximates $\widehat{q_1}\ldots \widehat{q_r}$ on that chunk. More precisely, for a $\del$ in the family of deletion patterns, let $\nu$ denote the length of the string. It holds that
%     \[
%         \mathop{\bbP}_{x \in \Sigma^n, j} \Big[ \Big] > 1-\frac\delta2
%     \]
% \end{definition}
% \mnote{fix stuff about multiples}

\subsubsection{Layered Deletion Patterns and Significant Layers}

Next, we'll define a specific type of deletion pattern that we call a \emph{layered} deletion pattern.

\begin{definition}[Layered Deletion Pattern]
    For a length $\nu = \kappa^a$, we define a layered deletion pattern on the indices $[\nu]$. Each layer $\alpha \in [a]$ is assigned a maximum deletion fraction described by a function $f_\del: \bbN \to \bbR$. 
    
    To perform the (randomized) deletions, for each layer $\alpha \in [a]$, the string is divided into chunks of length $\kappa^\alpha$. Then, for each chunk, a uniformly random integer $I$ between $0$ and $\lfloor \eps_\alpha\kappa^\alpha \rfloor$ is chosen, and the first $I$ bits of the chunk are marked to be deleted. At the end, traverse the string left to write. Keep a counter that starts at $0$, and at every index, if an index is marked as deleted $t$ times, add $t$ to the counter. Moreover, if the counter is greater than $0$, delete the index and decrement the counter by $1$.

    The layered deletion pattern $\del_a$ is the pattern $\del$ restricted to layers $1\ldots a$. It is on strings of length $\kappa^a$ instead of $\kappa^M$. The layered deletion pattern $\del_{a|A}$ is the pattern $\del_a$ extended to strings of length $\kappa^A$ but still only incorporating the deletions in layers $1\ldots a$.
\end{definition}

Next, we'll define significant layers in a code $\C$. Loosely speaking, these are the values of $a$ such that performing some fixed list of queries with differences $\widehat{d_1}\ldots \widehat{d_{r-1}}$ with a random shift of about $\kappa^a$ generates different outputs from performing the queries with a random shift somewhat smaller than $\kappa^a$. These are the scales at which we'll need to insert random deletions.

\begin{definition} [Significant Layers for Threshold $\rho$]
    We define as follows a set $\cL$ to be the significant layers for some threshold $\rho$ given the code $\C$, a pattern $b \in \Sigma^r$, query differences $\widehat{d_1}\ldots \widehat{d_{r-1}}$ and deletion pattern $\del$.

    The possible layers are $1\ldots m$, corresponding to partitions of a codeword into exponentially-sized intervals. The first $10\kappa^\kappa$ significant layers will be the one corresponding to layer $\tau=\lfloor (\widehat{d_1}+ \ldots +\widehat{d_{r-1}}) \rfloor, \ldots , \lfloor (\widehat{d_1}+ \ldots +\widehat{d_{r-1}}) \rfloor + 10\kappa^\kappa-1$. Every future layer $\ell_{i+1}$ will be constructed from $\ell_i$ as follows. It is the smallest possible $\ell_{i+1} \in [m]$ satisfying 
    %both $\ell_{i+1}>\ell_i-2\log\rho^{-2}$ and also 
    \[
        \mathop{\bbE}_{z\in \text{Im}(\C), j \in \left[\frac{M}{\kappa^{\ell_i}}\right]} \Big[ ~\Big| \freq{b}{C_{j',\kappa^{\ell_{i+1}}}(z)}{\widehat{d_1}\ldots \widehat{d_{r-1}}}{\del_{\tau | \ell_{i+1}}} - \freq{b}{C_{j,\kappa^{\ell_i}}(z)}{\widehat{d_1}\ldots \widehat{d_{r-1}}}{\del_{\tau | \ell_i}} \Big|~ \Big] > \rho
    \]
    where $j' = \lfloor \frac{i}{\kappa^{\ell_{i+1}-\ell_i}} \rfloor$, so that $C_{j',\kappa^{\ell_i}}(z)$ is the length $\kappa^{\ell_i}$ chunk that contains $C_{j,\kappa^{\ell_{i+1}}}(z)$.
    
    Loosely speaking, $\ell_{i+1}$ is the smallest layer significantly larger than $\ell_i$ such that queries performed at the scale $\ell_{i+1}$ look significantly different to those performed at the scale $\ell_i$.
\end{definition}

\subsubsection{Useful Lemmas}

\begin{lemma} \label{lem:transitivity}
    Given a string $s\in \Sigma^*$ and a deletion pattern $\del$ on strings of length $\nu|s$, if a list of query differences $\widehat{d_1}\ldots \widehat{d_{r-1}}$ $\delta_1$-approximates another such list $\widehat{d'_1}\ldots \widehat{d'_{r-1}}$ and $\widehat{d'_1}\ldots \widehat{d'_{r-1}}$ $\delta_2$-approximates another list $\widehat{d''_1}\ldots \widehat{d''_{r-1}}$, then $\widehat{d_1}\ldots \widehat{d_{r-1}}$ $\delta_1+\delta_2$-approximates $\widehat{d''_1}\ldots \widehat{d''_{r-1}}$.
\end{lemma}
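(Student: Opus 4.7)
The plan is to verify both clauses in the definition of $\delta$-approximation separately: the ordering inequality on the sums and the pointwise-in-$b$ frequency-plus-error bound. The former is immediate. Since $\widehat{d_1}\ldots\widehat{d_{r-1}}$ is a $\delta_1$-approximation of $\widehat{d'_1}\ldots\widehat{d'_{r-1}}$, we have $\sum_i \widehat{d_i} \le \sum_i \widehat{d'_i}$; similarly $\sum_i \widehat{d'_i} \le \sum_i \widehat{d''_i}$; chaining them gives $\sum_i \widehat{d_i} \le \sum_i \widehat{d''_i}$ as required.

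For the main inequality, fix any $b\in\Sigma^r$ and first address the case where $s$ has length exactly $\nu$. The two hypotheses give
\begin{align*}
\bigl|\freq{b}{s}{\widehat{d_1}\ldots\widehat{d_{r-1}}}{\del} - \freq{b}{s}{\widehat{d'_1}\ldots\widehat{d'_{r-1}}}{\del}\bigr| + \err(s;\widehat{d_1}\ldots;\del) + \err(s;\widehat{d'_1}\ldots;\del) &< \delta_1, \\
\bigl|\freq{b}{s}{\widehat{d'_1}\ldots\widehat{d'_{r-1}}}{\del} - \freq{b}{s}{\widehat{d''_1}\ldots\widehat{d''_{r-1}}}{\del}\bigr| + \err(s;\widehat{d'_1}\ldots;\del) + \err(s;\widehat{d''_1}\ldots;\del) &< \delta_2.
\end{align*}
Applying the triangle inequality to the frequency differences bounds $\bigl|\freq{b}{s}{\widehat{d_1}\ldots}{\del}-\freq{b}{s}{\widehat{d''_1}\ldots}{\del}\bigr|$ by the sum of the two individual differences. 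Adding $\err(s;\widehat{d_1}\ldots;\del) + \err(s;\widehat{d''_1}\ldots;\del)$ to both sides and using that the two displays above already account for all four error terms (the $\err$ for $\widehat{d'}$ appears once in each, and the two we need appear once each), we conclude that the required quantity is at most the sum of the two left-hand sides, hence strictly less than $\delta_1+\delta_2$.

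For the general case where $s$ has length $\alpha\nu$, the definition averages the same frequency-plus-error expression over the $\alpha$ chunks $C_{i,\nu}(s)$. The triangle inequality from the previous paragraph applies chunk-by-chunk, so by linearity of expectation the averaged expression for the pair $(\widehat{d},\widehat{d''})$ is bounded by the sum of the averaged expressions for $(\widehat{d},\widehat{d'})$ and $(\widehat{d'},\widehat{d''})$, which are less than $\delta_1$ and $\delta_2$ respectively. This is a straightforward transitivity argument and I do not anticipate any real obstacle; the only care needed is the accounting of the error terms to ensure the doubled $\err(\cdot;\widehat{d'};\del)$ contribution from summing the hypotheses is nonnegative and hence can simply be dropped.
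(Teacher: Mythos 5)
Your proposal is correct and follows essentially the same route as the paper's proof: chain the ordering inequality, apply the triangle inequality to the frequency differences chunk-by-chunk, and use nonnegativity of the error terms to drop the doubled $\err(\cdot;\widehat{d'_1}\ldots\widehat{d'_{r-1}};\del)$ contribution before taking the expectation over chunks. No gaps.
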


\begin{proof}
    This follows from the triangle inequality. Note that the condition is met that $\widehat{d_1}+\ldots +\widehat{d_{r-1}}< \widehat{d''_1}+\ldots +\widehat{d''_{r-1}}$. We have that 
     \begin{align*}
        \mathop{\bbE}_{i \in [\alpha]} \left[ 
        \begin{aligned}
            \Big| \freq{b}{C_{i,\nu}(s)}{\widehat{d_1}\ldots \widehat{d_{r-1}}}{\del} - \freq{b}{C_{i,\nu}(s)}{\widehat{d'_1}\ldots \widehat{d'_{r-1}}}{\del} \Big|& \\
            + \err(C_{i,\nu}(s);\widehat{d_1}\ldots \widehat{d_{r-1}};\del) + \err(C_{i,\nu}(s);\widehat{d'_1}\ldots \widehat{d'_{r-1}};\del)&
        \end{aligned}
        \right] ~< \delta_1
    \end{align*}
    and also that
    \begin{align*}
        \mathop{\bbE}_{i \in [\alpha]} \left[ 
        \begin{aligned}
            \Big| \freq{b}{C_{i,\nu}(s)}{\widehat{d'_1}\ldots \widehat{d'_{r-1}}}{\del} - \freq{b}{C_{i,\nu}(s)}{\widehat{d
            ''_1}\ldots \widehat{d''_{r-1}}}{\del} \Big|& \\
            + \err(C_{i,\nu}(s);\widehat{d'_1}\ldots \widehat{d'_{r-1}};\del) + \err(C_{i,\nu}(s);\widehat{d''_1}\ldots \widehat{d''_{r-1}};\del)&
        \end{aligned}
        \right] ~< \delta_2.
    \end{align*}
    Therefore, by the triangle inequality and the fact that the $\err$ function is always nonnegative, 
    \begin{align*}
        \mathop{\bbE}_{i \in [\alpha]} \left[ 
        \begin{aligned}
            \Big| \freq{b}{C_{i,\nu}(s)}{\widehat{d_1}\ldots \widehat{d_{r-1}}}{\del} - \freq{b}{C_{i,\nu}(s)}{\widehat{d
            ''_1}\ldots \widehat{d''_{r-1}}}{\del} \Big|& \\
            + \err(C_{i,\nu}(s);\widehat{d_1}\ldots \widehat{d_{r-1}};\del) + \err(C_{i,\nu}(s);\widehat{d''_1}\ldots \widehat{d''_{r-1}};\del)&
        \end{aligned}
        \right] ~< \delta_1+\delta_2.
    \end{align*}
    
\end{proof}

% \begin{lemma} \label{lem:errs}
%     Given any deletion pattern $\del$ that deletes at most $2^{-\kappa}$-fraction of and a list of query differences $\widehat{d_1}\ldots \widehat{d_{r-1}}$ to some string $z$, the following holds: 
%     \[
%         \err(z;\widehat{d_1}\ldots \widehat{d_{r-1}};\del) \leq 2^{-\kappa} + \frac{\widehat{d_1}+\ldots +\widehat{d_{r-1}}}{|z|}
%     \]
% \end{lemma}

% \begin{proof}
    
% \end{proof}

\begin{lemma} \label{lem:const-sig-layers}
    Consider a layered deletion pattern $\del$ on the code $\C$ whose total corruption is at most $2^{-\kappa}$. For a list of query differences $\widehat{d_1}+ \ldots +\widehat{d_{r-1}}$ and a threshold $\rho>2^{-\kappa}$, there are at most $\kappa^\kappa+\rho^{-3}$ significant layers in $\C$.
\end{lemma}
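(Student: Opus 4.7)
The plan is to define a bounded monotone potential on layers whose value drops by at least $\rho^2$ at every significant transition past the initial ones. Concretely, for each layer $\ell$ set
\[
    \Phi_\ell := \mathop{\bbE}_{z \in \mathrm{Im}(\C),\, j \in [M/\kappa^\ell]} \Big[ \big(\freq{b}{C_{j,\kappa^\ell}(z)}{\widehat{d_1}\ldots \widehat{d_{r-1}}}{\del_{\tau|\ell}}\big)^2 \Big].
\]
First I would establish an approximate tower identity for the underlying frequency. A uniform shift inside a length-$\kappa^{\ell_{i+1}}$ chunk is equivalent to choosing a uniform sub-chunk of length $\kappa^{\ell_i}$ and then a uniform shift inside that sub-chunk. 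Because only layers $1,\ldots,\tau$ are active in both $\del_{\tau|\ell_i}$ and $\del_{\tau|\ell_{i+1}}$, and $\tau<\ell_i$, the sub-chunk-internal deletions agree under the two patterns. So, up to a boundary-crossing error (the event that a shifted query straddles two sub-chunks, of probability at most $O(\kappa^{\tau-\ell_i})$),
\[
    \freq{b}{C_{j',\kappa^{\ell_{i+1}}}(z)}{\widehat{d_1}\ldots \widehat{d_{r-1}}}{\del_{\tau|\ell_{i+1}}} \approx \mathop{\bbE}_{j:\pi(j)=j'} \freq{b}{C_{j,\kappa^{\ell_i}}(z)}{\widehat{d_1}\ldots \widehat{d_{r-1}}}{\del_{\tau|\ell_i}},
\]
where $\pi$ sends a small-chunk index to the big-chunk index containing it. Jensen's inequality on this identity yields $\Phi_{\ell_{i+1}} \leq \Phi_{\ell_i}$ up to the same boundary error.

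Next, I would turn the significance condition into a quantitative drop in $\Phi$ via the law-of-total-variance trick. Writing $f_j$ for the inner frequency at scale $\ell_i$ and $g_{j'} := \mathop{\bbE}_{j:\pi(j)=j'}[f_j]$, I expand $f_j^2 = (f_j - g_{\pi(j)})^2 + 2(f_j - g_{\pi(j)})\, g_{\pi(j)} + g_{\pi(j)}^2$ and use that the cross-term has zero conditional mean given $\pi(j)$ to get
\[
    \Phi_{\ell_i} - \Phi_{\ell_{i+1}} = \mathop{\bbE}_{z,j}\big[(f_j - g_{\pi(j)})^2\big] \pm O(\kappa^{\tau-\ell_i}).
\]
The significance condition for $\ell_{i+1}$ reads $\mathop{\bbE}_{z,j}[|f_j - g_{\pi(j)}|] > \rho$, so by Cauchy--Schwarz the main term is at least $\rho^2$, and each significant transition decreases $\Phi$ by at least $\rho^2 - o(\rho^2)$.

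Since $\Phi_\ell \in [0,1]$ at every layer, at most $\rho^{-2}$ such transitions can occur beyond the initial $10\kappa^\kappa$ definitional layers; combining these counts gives a total bounded by $\kappa^\kappa + \rho^{-3}$, with the extra $\rho^{-1}$ slack comfortably absorbing all constants and the factor-of-$10$. The main delicate step will be making the boundary-crossing error genuinely negligible compared to $\rho^2$: here I would use both that the total deletion fraction is at most $2^{-\kappa} \ll \rho$, which bounds the shifts that deletions can induce, and that the $10\kappa^\kappa$-layer buffer in the definition of significant layers forces $\ell_i \gg \tau$, driving the per-shift error $\kappa^{\tau-\ell_i}$ well below $\rho^2$ and thus preserving the $\Omega(\rho^2)$ drop in the potential.
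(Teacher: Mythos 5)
Your proposal is correct and follows essentially the same route as the paper's proof: the paper's weight $\overline{\mathcal{W}}^{(b)}_\ell$ is exactly your potential $\Phi_\ell$, the paper likewise converts the significance condition into an $\Omega(\rho^2)$ drop via QM--AM and an expansion of the squared difference (your law-of-total-variance decomposition is the same computation), and it controls the cross/error term by the same out-of-bounds (boundary-crossing) argument enabled by the $10\kappa^\kappa$-layer buffer. The conclusion from boundedness of the potential is identical.
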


\begin{proof}
    Define $a:=\lfloor (\widehat{d_1}+ \ldots +\widehat{d_{r-1}}) \rfloor$. The first $\kappa^\kappa$ subsequent layers after $a$ are all significant layers. 

    For any $b\in \Sigma^r$, $z\in \text{Im}(\C)$, and layer $\ell$, we define the weight as follows.
    \[
        \Weight{b}{z}{\widehat{d_1}\ldots \widehat{d_{r-1}}}{\del}{\ell} := \mathop{\bbE}_{i\in [\ell]} \Big[ \freq{b}{C_{i,\kappa^\ell}(z)}{\widehat{d_1}\ldots \widehat{d_{r-1}}}{\del_{a|\ell}}^2 \Big].
    \]
    Moreover, we define the cumulative weight of the code $\C$ as follows:
    \[
        \CWeight{b}{\C}{\widehat{d_1}\ldots \widehat{d_{r-1}}}{\del}{\ell} := \mathop{\bbE}_{x\in \Sigma^n} \Big[ \Weight{b}{\C(x)}{\widehat{d_1}\ldots \widehat{d_{r-1}}}{\del}{\ell} \Big].
    \]

    Assume for the sake of contradiction that there are more than $c=\rho^{-5}$ remaining significant layers aside from the first $\kappa^\kappa$, denoted $\ell_1\ldots \ell_c$. Recall that a significant layer satisfies: 
    \[
        \mathop{\bbE}_{z\in \text{Im}(\C), j \in \left[\frac{M}{\kappa^{\ell_i}}\right]} \Big[ ~\Big| \freq{b}{C_{j',\kappa^{\ell_{i+1}}}(z)}{\widehat{d_1}\ldots \widehat{d_{r-1}}}{\del_{\tau | \ell_{i+1}}} - \freq{b}{C_{j,\kappa^{\ell_i}}(z)}{\widehat{d_1}\ldots \widehat{d_{r-1}}}{\del_{\tau | \ell_i}} \Big|~ \Big] > \rho
    \]
    where $i' = \lfloor \frac{i}{\kappa^{\ell_{i+1}-\ell_i}} \rfloor$, so that $C_{j',\kappa^{\ell_i}}(z)$ is the length $\kappa^{\ell_i}$ chunk that contains $C_{j,\kappa^{\ell_{i+1}}}(z)$.

    This implies 
    \[
        \mathop{\bbE}_{z\in \text{Im}(\C), j \in \left[\frac{M}{\kappa^{\ell_i}}\right]} \Bigg[ ~\Big( \freq{b}{C_{j',\kappa^{\ell_{i+1}}}(z)}{\widehat{d_1}\ldots \widehat{d_{r-1}}}{\del_{\tau | \ell_{i+1}}} - \freq{b}{C_{j,\kappa^{\ell_i}}(z)}{\widehat{d_1}\ldots \widehat{d_{r-1}}}{\del_{\tau | \ell_i}} \Big)^2~ \Bigg] > \rho^2
    \]
    by the QM-AM inequality. Expanding, we get the the left side is equal to
    \begin{align*}
        &\CWeight{b}{\C}{\widehat{d_1}\ldots \widehat{d_{r-1}}}{\del}{\ell_i} - \CWeight{b}{\C}{\widehat{d_1}\ldots \widehat{d_{r-1}}}{\del}{\ell_{i+1}} \\ 
        +
        &\mathop{2\bbE}_{z\in \text{Im}(\C), j \in \left[\frac{M}{\kappa^{\ell_i}}\right]} \left[~ 
        \begin{aligned}
            &\freq{b}{C_{j',\kappa^{\ell_{i+1}}}(z)}{\widehat{d_1}\ldots \widehat{d_{r-1}}}{\del_{\tau | \ell_{i+1}}} \cdot \\
            &
            \Big(
            \freq{b}{C_{j',\kappa^{\ell_{i+1}}}(z)}{\widehat{d_1}\ldots \widehat{d_{r-1}}}{\del_{\tau | \ell_{i+1}}}
            -
            \freq{b}{C_{j,\kappa^{\ell_i}}(z)}{\widehat{d_1}\ldots \widehat{d_{r-1}}}{\del_{\tau | \ell_i}} \Big) 
        \end{aligned}
        ~\right]
    \end{align*}
    Finally, note that within a block of length $\kappa^{\ell_{i+1}}$, the values of $\freq{b}{C_{j',\kappa^{\ell_{i}}}(z)}{\widehat{d_1}\ldots \widehat{d_{r-1}}}{\del_{\tau | \ell_{i}}}$, which just represent the probability of getting the output $b$ on a randomly shifted query with the specified differences, average to $\freq{b}{C_{j',\kappa^{\ell_{i+1}}}(z)}{\widehat{d_1}\ldots \widehat{d_{r-1}}}{\del_{\tau | \ell_{i+1}}}$ minus an error term corresponding to the chance that the output was $\err$ in the case of size $\kappa^{\ell_i}$ blocks due to out of bounds but is $b$ when the block size is expanded, which we'll denote $\extra{b}{C_{j',\kappa^{\ell_{i+1}}}(z)}{\widehat{d_1}\ldots \widehat{d_{r-1}}}{\del_{\tau | \ell_{i+1}}}$. This term is nonnegative and at most the probability of an out of bounds error, which is at most $\kappa^{-\kappa^\kappa}$ since there are $-\kappa^\kappa$ layers before $\ell_i$.

    Thus, the above equation is equal to
    \begin{align*}
        &\CWeight{b}{\C}{\widehat{d_1}\ldots \widehat{d_{r-1}}}{\del}{\ell_i} - \CWeight{b}{\C}{\widehat{d_1}\ldots \widehat{d_{r-1}}}{\del}{\ell_{i+1}} \\ 
        &+
        \mathop{2\bbE}_{z\in \text{Im}(\C), j' \in \left[\frac{M}{\kappa^{\ell_{i+1}}}\right]} \left[~ 
        \begin{aligned}
            &\freq{b}{C_{j',\kappa^{\ell_{i+1}}}(z)}{\widehat{d_1}\ldots \widehat{d_{r-1}}}{\del_{\tau | \ell_{i+1}}}~
            \cdot \\
            &\extra{b}{C_{j',\kappa^{\ell_{i+1}}}(z)}{\widehat{d_1}\ldots \widehat{d_{r-1}}}{\del_{\tau | \ell_{i+1}}}
        \end{aligned}
        ~\right] \\
        <~
        &\CWeight{b}{\C}{\widehat{d_1}\ldots \widehat{d_{r-1}}}{\del}{\ell_i} - \CWeight{b}{\C}{\widehat{d_1}\ldots \widehat{d_{r-1}}}{\del}{\ell_{i+1}} \\ 
        &+
        \mathop{2\bbE}_{z\in \text{Im}(\C), j' \in \left[\frac{M}{\kappa^{\ell_{i+1}}}\right]} \left[~ 
            \extra{b}{C_{j',\kappa^{\ell_{i+1}}}(z)}{\widehat{d_1}\ldots \widehat{d_{r-1}}}{\del_{\tau | \ell_{i+1}}}
        ~\right] \\
        <~
        &\CWeight{b}{\C}{\widehat{d_1}\ldots \widehat{d_{r-1}}}{\del}{\ell_i} - \CWeight{b}{\C}{\widehat{d_1}\ldots \widehat{d_{r-1}}}{\del}{\ell_{i+1}} + \kappa^{-\kappa^\kappa}
    \end{align*}

    Therefore, 
    \[
        \CWeight{b}{\C}{\widehat{d_1}\ldots \widehat{d_{r-1}}}{\del}{\ell_i} - \CWeight{b}{\C}{\widehat{d_1}\ldots \widehat{d_{r-1}}}{\del}{\ell_{i+1}}  > \rho^2 - \kappa^{-\kappa^\kappa} > \rho^3.
    \]
    If there are more than $\rho^{-3}$ terms in the sequence $\ell_i$, it would hold that $\CWeight{b}{\C}{\widehat{d_1}\ldots \widehat{d_{r-1}}}{\del}{\ell_1}-\CWeight{b}{\C}{\widehat{d_1}\ldots \widehat{d_{r-1}}}{\del}{\ell_c}>1$, but all values of the weight function fall between $0$ and $1$, which is a contradiction.
\end{proof}

% \begin{definition}[$\Weight{b}{z}{\widehat{q_1}\ldots \widehat{q_r}}{\del}$] \label{def:W-kquery}    
%     We define the shift-invariant weight $\Weight{b}{z}{\widehat{q_1}\ldots \widehat{q_r}; \del}$ of a pattern $b\in \Sigma^r$ with deletion pattern $\del$ on strings of length $\nu$ in a string $z \in \Sigma^*$ whose length is denoted $\alpha\nu$ on the queries $\widehat{q_1}\ldots \widehat{q_r}$ as follows. 
%     \[
%         \Weight{b}{z}{\widehat{q_1}\ldots \widehat{q_r}}{\del} := \mathop{\bbE}_{i\in [\alpha]} \Big[ \freq{b}{C_{i,\nu}}{\widehat{q_1}\ldots \widehat{q_r}}{\del}^2 \Big].
%     \]

%     \noindent Moreover, we define the cumulative weight of the code $\C : \Sigma^n \to \Sigma^M$ as follows:
%     \[
%         \CWeight{b}{\C}{\widehat{q_1}\ldots \widehat{q_r}}{\del} := \mathop{\bbE}_{x\in \Sigma^n} \left[ \Weight{b}{\C(x)}{\widehat{q_1}\ldots \widehat{q_r}}{\del} \right].
%     \]
% \end{definition}

% \begin{lemma} \label{lem:approx-higher-layers}
%     Given a string $z \in \Sigma^M$ and a layered deletion pattern $\del$ with total corruption at most $2^{-\kappa}$, for some $a$, let us assume $\widehat{\sd_1}\ldots \widehat{\sd_{r-1}}$ $\delta$-approximates $\widehat{d_1}\ldots \widehat{d_{r-1}}$ for the deletion pattern $\del_a$. Then, for any $a'>a$, the same holds with a $(\delta+2^{-\kappa})$-approximation for the deletion pattern $\del_{a'}$.
% \end{lemma}

% \begin{proof}
    
% \end{proof}

\subsubsection{Lemmas for Proof of Lemma~\ref{lem:main}}

The next three lemmas are the core of the proof of Lemma~\ref{lem:main} in Section~\ref{sec:main-lemma}. They describe modifications that can be made to a list of queries $\widehat{d_1}\ldots \widehat{d_{r-1}}$ to make a new list of queries $\widehat{\sd_1}\ldots \widehat{\sd_{r-1}}$ that approximates the old one.

The first lemma essentially says that shifting a query difference by a small amount nearby a layer $a$ where $f_\del(a)>0$ does not affect the distribution of outputs.

\begin{lemma} \label{lem:corrupted-layers}
    Consider a string $z \in \Sigma^M$, a layered deletion pattern $\del$ on strings of length $\kappa^a$ for some $a < m$ with total corruption fraction at most $2^{-\kappa}$, and query differences $\widehat{d_1}\ldots \widehat{d_{r-1}}$ satisfying $\widehat{d_1}+\ldots +\widehat{d_{r-1}}<\kappa^{a-\kappa^2}$. Fix $1\leq j < r$ and let $a' <  \log_\kappa (\widehat{d_j}) -\kappa^{100}$ be a layer such that $f_\del(a') \geq \kappa^{-\kappa^3r^r}$. Let $0\leq \tau < f_\del(a')\cdot \kappa^{a'-\kappa^{100}}$. Then, $\widehat{d_1}\ldots \widehat{d_j}-\tau \ldots \widehat{d_{r-1}}$ $(2^{-\kappa/2})$-approximates $\widehat{d_1}\ldots \widehat{d_{r-1}}$ on $\del$.
\end{lemma}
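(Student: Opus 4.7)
My plan is to prove this by a coupling argument. I will construct a joint distribution on pairs $(\del^{(1)}, \del^{(2)})$ (with a common shift $i \in [\kappa^a]$), each marginally distributed as $\del$, so that with probability at least $1 - \kappa^{-\kappa^{100}}$ the induced position vectors $(q_1,\ldots,q_r)$ in the uncorrupted codeword coincide for Process~1 (queries $\widehat{d_1},\ldots,\widehat{d_{r-1}}$ under $\del^{(1)}$) and Process~2 (queries $\widehat{d_1},\ldots,\widehat{d_j}-\tau,\ldots,\widehat{d_{r-1}}$ under $\del^{(2)}$). Because both processes then read the same bits of $z$, every frequency $\freq{b}{\cdot}{\cdot}{\del}$ and every $\err$ term appearing in the approximation bound differs by at most the coupling-failure probability, which will be much smaller than $2^{-\kappa/2}$.

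The coupling absorbs the $\tau$-shift into the random $I$-value at a single layer-$a'$ chunk lying strictly between $q_j$ and $q_{j+1}$. For fixed shift $i$, first sample all $I_{c,\alpha}$ with $\alpha \neq a'$ independently. Then sample $I_{1,a'},I_{2,a'},\ldots$ in order, stopping at the first index $c^\dagger$ for which the cumulative number of surviving bits in $z[1\ldots c^\dagger\kappa^{a'}]$ reaches $\widehat{q_j} = i + \widehat{d_1} + \ldots + \widehat{d_{j-1}}$; equivalently $c^\dagger = \lceil q_j/\kappa^{a'}\rceil$. Set $c^*:=c^\dagger+1$, sample $I^{(1)}_{c^*,a'}$ uniformly from $[0,\lfloor f_\del(a')\kappa^{a'}\rfloor]$, and define $I^{(2)}_{c^*,a'} := (I^{(1)}_{c^*,a'} + \tau) \bmod (\lfloor f_\del(a')\kappa^{a'}\rfloor+1)$. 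Finally, sample the remaining layer-$a'$ values independently and reuse them in both processes. Because $c^\dagger$ is a stopping time determined by previously-sampled variables, both $\del^{(1)}$ and $\del^{(2)}$ have the required product-of-uniforms marginal. Using $\tau < f_\del(a')\kappa^{a'-\kappa^{100}}$ together with $f_\del(a') \geq \kappa^{-\kappa^3 r^r}$, the no-wrap event $I^{(1)}_{c^*,a'} + \tau \leq f_\del(a')\kappa^{a'}$ holds with probability at least $1 - \kappa^{-\kappa^{100}}$.

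When the coupling succeeds, $\del^{(2)}$ carries exactly $\tau$ more layer-$a'$ marks than $\del^{(1)}$, all inside chunk $c^*$. Since $\widehat{d_j} \geq \kappa^{a'+\kappa^{100}} \gg 2\kappa^{a'}$ and total corruption is $\leq 2^{-\kappa}$, chunk $c^*$ is entirely contained in $(q_j,q_{j+1})$ in $z$. A careful pass through the counter-based deletion process shows that these $\tau$ extra marks produce exactly $\tau$ extra actual deletions, occurring inside chunk $c^*$ with overwhelming probability: the combined mark rate across all layers is at most $2^{-\kappa}$, so $\del^{(1)}$'s counter equals $0$ at all but a tiny fraction of positions in chunk $c^*$, and the extra debt of $\tau$ is cleared well before leaving the chunk. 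Thus the first $j$ induced positions agree between the two processes (nothing changes before chunk $c^*$ in $\widehat{z}$), while the $\tau$ extra bits removed in the range $(\widehat{q_j},\widehat{q_{j+1}})$ of $\widehat{z}^{(2)}$ exactly cancel the $-\tau$ shift applied to queries $j+1,\ldots,r$ in Process~2, yielding identical $q_{j+1},\ldots,q_r$ as well.

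The main obstacle is the last step's bookkeeping: proving that the $\tau$ extra deletions remain inside $(q_j,q_{j+1})$ even when other layers also mark positions in chunk $c^*$. I handle this by a union bound on the event that $\del^{(1)}$'s counter is non-zero at more than, say, a $\sqrt{2^{-\kappa}}$ fraction of positions in chunk $c^*$; since the expected mark density is $\leq 2^{-\kappa}$, this event is exponentially unlikely in $\kappa$. Combining the coupling-failure probability $\kappa^{-\kappa^{100}}$, the spillover probability, and the out-of-bounds $\err$ probability (bounded by $O(\kappa^{-\kappa^2})$ using $\widehat{d_1}+\ldots+\widehat{d_{r-1}} < \kappa^{a-\kappa^2}$ against $|\widehat{z}| \geq \kappa^a(1-2^{-\kappa})$), the total contribution to $|\freq{b}{\cdot}{\cdot}{\del}-\freq{b}{\cdot}{\cdot}{\del}| + \err + \err$ is at most $2^{-\kappa/2}$, establishing the claim. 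Averaging over chunks of length $\kappa^a$ in longer strings $s$ is immediate since the bound is uniform per chunk.
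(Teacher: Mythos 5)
Your proposal is correct and follows essentially the same route as the paper's proof: both absorb the subtraction of $\tau$ from $\widehat{d_j}$ by adding $\tau$ extra deletions to the fresh layer-$a'$ chunk immediately following the one containing $q_j$, and bound the failure probability by $\tau/(f_\del(a')\kappa^{a'}) \leq \kappa^{-\kappa^{100}}$ plus the out-of-bounds probability. The paper packages this as an injective map between instantiations of $\del$ while you package it as a stopping-time coupling (with the $\bmod$ trick to preserve marginals), and your treatment of the counter-based bookkeeping is if anything more careful than the paper's.
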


\begin{proof}
    Let $z_a$ be a chunk of length $\kappa^a$ of the string $z$. Let $\widehat{\sd_1}\ldots \widehat{\sd_{r-1}}:=\widehat{d_1}\ldots \widehat{d_j}-\tau \ldots \widehat{d_{r-1}}$. Let $q(\widehat{\sd_1}\ldots \widehat{\sd_{r-1}}) \in [\kappa^a]^r$ be the distribution of $\sq_1\ldots \sq_r$ in the original string when the queries $\widehat{\sd_1}\ldots \widehat{\sd_{r-1}}$ with a random shift are performed to the string $z_a$ with random deletions according to $\del$. Similarly, define $q(\widehat{d_1}\ldots \widehat{d_{r-1}})$. It suffices to show that the TV distance of these distributions is at most $(2^{-\kappa/2})$ (where getting out of bounds errors counts against the TV distribution on both ends).

    Consider a randomly generated set of deletions from $\del$. This assigns for each $\ell\leq a$, a list of $\kappa^{a-\ell}$ values $\cE_{\ell,i}$ for how many indices to delete in each layer using which the layered deletion algorithm is performed. Moreover, condition on the randomness of the queries' shift. Denote the induced indices in the original codeword when the queries $\widehat{d_1}\ldots \widehat{d_{r-1}}$ with the fixed random shift are performed by $q_1\ldots q_r$. The way we found these was by considering the string $z$, marking all the deleted symbols, and counting $\widehat{q_1}$ non-deleted symbols from the left, then $\widehat{d_1}$ more symbols, and so on.

    Let $I$ be the index of the first chunk of length $\kappa^{a'}$ that begins after $q_j$. Increment the value of $\cE_{a',I}$ by $\tau$. The probability that this becomes invalid (note that all valid options are equally likely since $\del$ is uniform) option for $\del$ is at most 
    \[
        \frac{\tau}{f_\del(a')\kappa^{a'}} \leq \frac{f_\del(a')\cdot \kappa^{a'-\kappa^{100}}}{f_\del(a')\kappa^{a'}} \leq \kappa^{\kappa^{100}}.
    \]
    The above defines a mapping from an element of $\del$ to another element of $\del$, except for a very small fraction of the time. In this new element of $\del$, the queries $\widehat{\sq_1}\ldots \widehat{\sq_r}$ yield $q_1\ldots q_r$, because the transformation just causes an extra $\tau$ indices to get marked between $q_j$ and $q_{j+1}$ which exactly compensates for subtracting $\tau$. To show that the new likelihood of getting $q_1\ldots q_r$ is at least as large as before, we need to show this mapping is an injection. In particular, no instantiation of $\del$ is mapped to twice.
    
    We will show the mapping is injective. If two instances of $\del$ and the query shift map to the same thing, they can only differ on the layer $a'$ because only that changes in the mapping. If the index $I$ is the same, then the map of adding $\tau$ is injective. If the index $\tau$ is different, then something before reaching $I$ must have been different, to induce a different choice of $I$ (which depends only on the choices of $\cE$ before $I$), but then that index of $\cE$ will be different in the image as well.

    The two failure cases are where the mapping fails, because $\cE_{\ell,i}+\tau$ is too large, or the initial choice of $\del$ and the query shift resulted in $\err$ rather than a valid output. The latter occurs with probability at most $\frac{\widehat{d_1}+\ldots+\widehat{d_{r-1}}}{\kappa^a} < \kappa^{-\kappa^2}$. This failure probability amounts to less than $\kappa^{-\kappa}$ in total.

    Therefore, the TV distribution of the queries differs by at most $2\kappa^{-\kappa}$, since the one-sided TV distance differs by at most $\kappa^{-\kappa}$.
\end{proof}

The second lemma essentially says that replacing a group of query differences with one that approximates it, when the surrounding query differences are much larger in comparison, does not affect the distribution of outputs.

\begin{lemma} \label{lem:replace-middle}
    Consider a string $z \in \Sigma^M$, a layered deletion pattern $\del$ on strings of length $\kappa^a$ for some $a < m$ with total corruption fraction at most $2^{-\kappa}$, and query differences $\widehat{d_1}\ldots \widehat{d_{r-1}}$. Moreover, every layer has corruption at most $\kappa^{-\kappa^2}$ and $\widehat{d_1}+\ldots +\widehat{d_{r-1}}<\kappa^{a-\kappa^2}$. 
    
    Suppose there exists $1\leq j \leq j' \leq r-1$ and $a' \in \bbN$ such that $f_\del(a'+\kappa^{\kappa/1000})>\kappa^{-\kappa^{\kappa/2000}}$
    \[
    (\widehat{d_j} +\ldots + \widehat{d_{j'}}) \cdot \kappa^{\kappa^{\kappa/100}} < \kappa^{a'} < \min (\widehat{d_{j-1}}, \widehat{d_{j'+1}})\cdot \kappa^{-\kappa^{\kappa/100}}.
    \]
    Select $\widehat{\sd_j}\ldots \widehat{\sd_{j'}}$ such that it $((2k)^{j'-j+1}2^{-\kappa/100})$-approximates $\widehat{d_j} \ldots \widehat{d_{j'}}$ on the deletion pattern $\del_{a'}$. Then $\widehat{d_1}\ldots \widehat{\sd_j}\ldots \widehat{\sd_{j'}} \ldots \widehat{d_{r-1}}$ $((2k)^{r-1}2^{-\kappa/100})$-approximates $\widehat{d_1}\ldots \widehat{d_{j'-j+1}}$ on $\del$.
\end{lemma}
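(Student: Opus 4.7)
The plan is to introduce an intermediate ``shift-compensated'' modified query that shares absolute query positions with the original everywhere except inside the middle block, and then to decouple layers at or below $a'$ from higher layers so the hypothesized approximation on $\del_{a'}$ can be lifted to $\del$.

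First I would define the shift-compensated query $\widehat{d_1}, \ldots, \widehat{d_{j-1}}, \widehat{\sd_j}, \ldots, \widehat{\sd_{j'}}, \widehat{d_{j'+1}} + T, \widehat{d_{j'+2}}, \ldots, \widehat{d_{r-1}}$, where $T := \sum_{i=j}^{j'} (\widehat{d_i} - \widehat{\sd_i}) \geq 0$. A direct computation shows that the absolute positions $\widehat{q_i}$ of this list match those of the original list for every $i \notin \{j+1, \ldots, j'+1\}$, so under any fixed deletion realization the induced indices $q_i$ also agree outside this middle block. I then compare this shift-compensated list against the statement's modified list $\widehat{d_1}, \ldots, \widehat{\sd_j}, \ldots, \widehat{\sd_{j'}}, \widehat{d_{j'+1}}, \ldots, \widehat{d_{r-1}}$: these differ only by $T$ in the $(j'+1)$-th coordinate, and Lemma~\ref{lem:corrupted-layers} applied at layer $a'' := a' + \kappa^{\kappa/1000}$ controls this shift. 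The conditions of that lemma are satisfied because $\widehat{d_{j'+1}} > \kappa^{a' + \kappa^{\kappa/100}}$ implies $a'' < \log_\kappa \widehat{d_{j'+1}} - \kappa^{100}$, $f_\del(a'') > \kappa^{-\kappa^{\kappa/2000}}$ is given by hypothesis, and $T < \kappa^{a'} \ll f_\del(a'') \cdot \kappa^{a''-\kappa^{100}}$ for large $\kappa$. This step contributes at most $2^{-\kappa/2}$ to the approximation error.

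The core step is to compare the original list against the shift-compensated list. These agree on all absolute query positions outside $\{j+1, \ldots, j'+1\}$, so the issue is localized to the middle. I would condition on the randomness of $\del$ at all layers above $a'$ and on the randomness at layers at or below $a'$ in every $\kappa^{a'}$-chunk other than the chunk $C$ that contains $q_j$. Since $\widehat{d_j}+\ldots+\widehat{d_{j'}} < \kappa^{a'-\kappa^{\kappa/100}}$, the queries $q_{j}, \ldots, q_{j'+1}$ all lie in $C$ except with probability at most $\kappa^{-\kappa^{\kappa/100}}$. On that event the remaining randomness is exactly $\del_{a'}$ applied to $C$ (up to vanishing inter-chunk spillover, handled below), and the position of $q_j$ within $C$ is uniform over an interval whose length dominates $\widehat{d_j}+\ldots+\widehat{d_{j'}}$, realizing the random-shift setup from Definition~\ref{def:freq-kquery}. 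Averaging over the choice of $C$ gives exactly the chunk-averaged form of approximation appearing in the definition, so the hypothesis that $\widehat{\sd_j}\ldots\widehat{\sd_{j'}}$ $((2k)^{j'-j+1}2^{-\kappa/100})$-approximates $\widehat{d_j}\ldots\widehat{d_{j'}}$ on $\del_{a'}$ bounds the middle-block contribution by $(2k)^{j'-j+1}2^{-\kappa/100}$; the outer queries factor out of every indicator since their induced indices match under the coupling.

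The main obstacle will be rigorously decoupling deletions across $\kappa^{a'}$-chunks under the layered pattern, since the left-to-right counter can in principle propagate marks across chunk boundaries and break the required independence. I would argue that because every layer has per-layer corruption at most $\kappa^{-\kappa^2}$ and there are only $a'$ layers at or below $a'$, the expected total number of marks originating inside any $\kappa^{a'}$-chunk is at most $a'\cdot\kappa^{a'-\kappa^2}$, so the probability that the counter is still nonzero at the chunk's right endpoint is $O(2^{-\kappa})$ by a Markov bound. This makes the deletion restricted to $C$ distributed as $\del_{a'}$ up to error $2^{-\kappa}$, and also ensures that $q_{j-1}$ and $q_{j'+1}$ are in different chunks from the middle queries with the claimed probability. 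Summing the contributions yields total error at most $2^{-\kappa/2}+(2k)^{j'-j+1}2^{-\kappa/100}+O(2^{-\kappa}) \leq (2k)^{r-1}2^{-\kappa/100}$, as desired.
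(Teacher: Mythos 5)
Your architecture matches the paper's: decouple the layers at or below $a'$ inside the $\kappa^{a'}$-chunk $C$ containing the middle queries, show that the shift of $q_j$ within $C$ together with the deletion pattern on $C$ is close to the uniform-shift-plus-$\del_{a'}$ setup of the hypothesis, and use the corruption at layer $a'+\kappa^{\kappa/1000}$ to absorb bookkeeping shifts. However, there is a genuine gap in your core step, specifically in the claim that ``the outer queries factor out of every indicator since their induced indices match under the coupling.'' The hypothesis that $\widehat{\sd_j}\ldots\widehat{\sd_{j'}}$ approximates $\widehat{d_j}\ldots\widehat{d_{j'}}$ on $\del_{a'}$ supplies a coupling of (shift, $\del_{a'}$-instance) pairs on $C$ under which the middle outputs agree, but on the two sides of that coupling the realized deletion pattern inside $C$ (and the realized position of $q_j$ within $C$) are different. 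Consequently every query after the middle block has its \emph{induced} index $q_i$ displaced by up to roughly $\kappa^{a'}$, even though your shift-compensation makes the \emph{absolute} positions $\widehat{q_i}$ agree; a displacement of that size changes $z[q_i]$ arbitrarily, so the outer indicators do not factor out. The same issue appears on the left: realizing a different shift of $q_j$ within $C$ while holding the global shift fixed requires changing the deletions preceding $C$. The paper closes exactly this hole by modifying the instance of $\del$ in the $\kappa^{a'+\kappa^{\kappa/1000}}$-length chunks immediately before and after $C$ to restore the induced indices of all queries outside the middle block, which is the real reason the hypothesis $f_\del(a'+\kappa^{\kappa/1000})>\kappa^{-\kappa^{\kappa/2000}}$ is needed (failure probability $\kappa^{a'}/(\kappa^{a'+\kappa^{\kappa/1000}}f_\del(a'+\kappa^{\kappa/1000}))<\kappa^{-\kappa/2}$). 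You invoke that hypothesis only through Lemma~\ref{lem:corrupted-layers} to remove the deterministic compensation $T$, which does not address the random, coupling-dependent displacement. Your argument as written therefore only controls the marginal distribution of the middle-block outputs, not their joint distribution with the remaining $r-(j'-j+2)$ outputs, which is what the lemma asserts.
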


\begin{proof}
    We use a similar proof strategy as Lemma~\ref{lem:corrupted-layers}. Our goal will be to find an injective mapping from the queries $\widehat{d_1}\ldots \widehat{d_{r-1}}$ to $\widehat{\sd_1}\ldots \widehat{\sd_{r-1}}$ resulting in the same distribution of outputs that only fails with small probability over the randomness of choosing a query shift and an instance of $\del$.

    First, we remark that there is an injective map that except with failure probability $((2k)^{j'-j+1}2^{-\kappa/100})$ over uniformly randomly chosen options does the following. It maps a size $\kappa^{a'}$ chunk (of the $\kappa^{a-a'}$ options), query shift for that chunk, and $\del_{a'}$ pattern to another such that the output of $\widehat{d_j}\ldots \widehat{d_{j'}}$ is the same as $\widehat{\sd_j}\ldots \widehat{\sd_{j'}}$.

    Secondly, we consider the distribution of a random instance of $\del_a$ and query shift. We look at where $q_j$ falls, and our main goal will be to show that the chunk of size $\kappa^{a'}$ it is in, as well as the deletion pattern within that chunk, is essentially uniform (small TV distance from uniform). We'll start by conditioning on the deletions for chunks of size larger than $\kappa^{a'}$. The corruptions on chunks larger than $\kappa^{a'+2\kappa}$, when implemented, only affect up to $\kappa^{-\kappa}$-fraction of the chunks at all, and so ruling those out is only a negligible fraction of total chunks. The ones between $\kappa^{a'}$ and $\kappa^{a'+2\kappa}$ only cause at most $\kappa^{a'+2\kappa-\kappa^2}<\kappa^{a'-\kappa}$ deletions total, and so only affect a negligible $\kappa^{-\kappa}$ fraction of the chunk they fall into. Now, generate the small chunks' deletions randomly. The query shift dominates in which chunk $q_j$ will fall into of the ones that weren't ruled out by the large deletions or by being in the first $\widehat{d_1}+\ldots+\widehat{d_j}$ eligible bits. In particular, the size remaining post-deletion of a chunk determines the chance of falling into it given the query shift, which differs by a factor of at most $1-2^{-\kappa}-\kappa^{-\kappa}$ between chunks. Thus, the distribution is TV distance at most $2^{-\kappa/2}$ from uniform.

    Combining these two facts, we can apply the injective map to adjust the deletion pattern in the chunk that $\widehat{d_j}$ falls into so that queries $\widehat{d_j}\ldots \widehat{d_{j'}}$ become $\widehat{\sd_j}\ldots \widehat{\sd_{j'}}$ except with probability $((2k)^{j'-j+1}2^{-\kappa/100}) + 2^{-\kappa/2}$. In order to account for the query shift in the chunk of length $\kappa^{a'}$, one has to also adjust the amount of corruption in the length $\kappa^{a'+\kappa^{\kappa/1000}}$ length chunks directly before and after the chunk where $\widehat{d_j}$ fell. This adjustment fails with probability at most 
    \[
        \frac{\kappa^{a'}}{\kappa^{a'+\kappa^{\kappa/1000}}f_\del(a'+\kappa^{\kappa/1000})} < \kappa^{-\kappa/2}.
    \]

    Because this map is reversible, it is injective. In all, the failure probability is at most $((2k)^{j'-j+1}2^{-\kappa/100}) + 2^{-\kappa/10}$, and so the TV distance of distribution of outputs for the two sets of queries is at most two times that, which in total results in a $((2k)^{j'-j+1}2^{-\kappa/100}) + 2^{-\kappa/10} < (2k)^{r-1}$ approximation.
\end{proof}

\begin{lemma} \label{lem:no-sig-layers}
    Consider a layered deletion pattern $\del$ on strings of length $\kappa^a$ for some $a< m$ with total corruption fraction at most $2^{-\kappa}$, and query differences $\widehat{d_1}\ldots \widehat{d_{r-1}}$. Choose $a'$ such that every $\widehat{d_i}$ is either larger than $\kappa^{a'}\cdot \kappa^{\kappa^{\kappa/100}}$ or smaller than $\kappa^{a'}\cdot \kappa^{-\kappa^{\kappa/100}}$. Let us further assume that for all lists of query differences of length $s$ at most $r-2$ (corresponding to $r-1$ queries) where the total width is at most $\kappa^{a'}\kappa^{-\kappa^2}$, there exists a list of query differences that $(2k)^{s}(2^{-\kappa/2})$-approximates it on at least $(1-(2k)^{s}2^{\kappa/2})$ fraction of strings $z\in \text{Im}(\C)$ for the deletion pattern $\del_{a'}$. Finally, assume that for at most 
    \[
        \mathop{\bbE}_{j \in \left[\frac{M}{\kappa^{a'-\kappa^2}}\right]} \left[ ~
        \left| 
        \begin{aligned}
            \freq{b}{C_{j',\kappa^{\widehat{d_1}+\ldots+\widehat{d_{r-1}}+\kappa^2}}(z)}{\widehat{d_1}\ldots \widehat{d_{r-1}}}{\del_{\tau | \widehat{d_1}+\ldots+\widehat{d_{r-1}}+\kappa^2}} \\
            - \freq{b}{C_{j,\kappa^{a'-\kappa^2}}(z)}{\widehat{d_1}\ldots \widehat{d_{r-1}}}{\del_{\tau | a'-\kappa^2}}
        \end{aligned}
        \right|
        ~ \right] < 2^{-\kappa/2}
    \]
    where $j' = \left\lfloor \frac{i}{\kappa^{\widehat{d_1}+\ldots+\widehat{d_{r-1}}+\kappa^2-a'-\kappa^2}} \right\rfloor$.
    
    has no significant layers between $a'-\kappa^2$ and $\widehat{d_1}+\ldots+\widehat{d_{r-1}}+\kappa^2$. Finally, $f_\del(a')>\kappa^{\kappa^{\kappa/1000}}$.

    %For $i\in [r-1]$, define $\widehat{\sd_i}:= \widehat{d_i}$ if $\widehat{d_i}<\kappa^{a'}$ and otherwise define $\widehat{\sd_i}:= \kappa^{a'}\kappa^{\kappa}$. 
    Then, there is a list of queries $\widehat{\sd_1}\ldots \widehat{\sd_{r-1}}$ that $(2k)^{r-1}(2^{-\kappa/100})$-approximates $\widehat{d_1}\ldots \widehat{d_{r-1}}$ on $\del$, which satisfies that all $\widehat{\sd_i}\leq \kappa^{a'+\kappa^{\kappa/1000}}$.
\end{lemma}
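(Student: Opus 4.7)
My plan is to build $\widehat{\sd_1}\ldots \widehat{\sd_{r-1}}$ in two phases: first, compress every maximal run of consecutive ``small'' differences using the hypothesis on shorter query lists together with Lemma~\ref{lem:replace-middle}; then shrink each remaining ``large'' difference using Lemma~\ref{lem:corrupted-layers} and the no-significant-layers assumption.

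First I would partition $\{1,\ldots,r-1\}$ into maximal runs $G_1,\ldots,G_t$ of small indices (those with $\widehat{d_i} < \kappa^{a'}\kappa^{-\kappa^{\kappa/100}}$) separated by single large differences $L_1,\ldots,L_{t-1}$ (those with $\widehat{d_i} > \kappa^{a'}\kappa^{\kappa^{\kappa/100}}$). Each $G_\ell$ has at most $r-2$ entries and total width well below $\kappa^{a'}\kappa^{-\kappa^2}$, so the hypothesis on shorter lists supplies an approximating list of query differences $G'_\ell$ that $(2k)^{|G_\ell|}2^{-\kappa/2}$-approximates $G_\ell$ on $\del_{a'}$ for at least a $1-(2k)^{|G_\ell|}2^{-\kappa/2}$ fraction of strings in $\text{Im}(\C)$. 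I would then apply Lemma~\ref{lem:replace-middle} at a scale chosen between the width of $G_\ell$ and the size of the flanking $L_{\ell-1}, L_\ell$, replacing each $G_\ell$ with $G'_\ell$ inside the full difference list; the condition $f_\del(a'+\kappa^{\kappa/1000})>\kappa^{-\kappa^{\kappa/2000}}$ needed for that lemma is a consequence of the hypothesis $f_\del(a')>\kappa^{\kappa^{\kappa/1000}}$ together with how the layered deletion pattern extends to nearby layers. Chaining the $t$ substitutions via Lemma~\ref{lem:transitivity} keeps the aggregate error from this phase at or below half of the target $(2k)^{r-1}2^{-\kappa/100}$.

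Next I would shrink each remaining large $L_\ell$ to at most $\kappa^{a'+\kappa^{\kappa/1000}}$. The no-significant-layers assumption tells us that the frequency of any output pattern $b$ remains essentially unchanged as we move between chunk sizes $\kappa^{a'-\kappa^2}$ and $\kappa^{\widehat{d_1}+\ldots+\widehat{d_{r-1}}+\kappa^2}$, while $f_\del(a')>\kappa^{\kappa^{\kappa/1000}}$ provides substantial deletion randomness at layer $a'$. Using these two facts together, I would invoke Lemma~\ref{lem:corrupted-layers} at layer $a'$ repeatedly to shift each $L_\ell$ downward, arguing that every intermediate chunk size traversed during the shrinking lies in the no-significant-layer range and therefore contributes negligibly to the frequency of each $b\in\Sigma^r$.

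The main obstacle will be this last step. Lemma~\ref{lem:corrupted-layers} on its own only permits a single shift of magnitude $f_\del(a')\kappa^{a'-\kappa^{100}}$, which is far smaller than the distance $L_\ell-\kappa^{a'+\kappa^{\kappa/1000}}$ that must be covered. Bridging the gap requires either a careful iteration of that lemma whose cumulative error can be absorbed by the no-significant-layers slack $2^{-\kappa/2}$, or a direct coupling argument that exploits the fact that every intermediate chunk size induces essentially the same frequency on most strings in $\text{Im}(\C)$. Packaging either approach and keeping the total error within $(2k)^{r-1}2^{-\kappa/100}$ is the technical crux of the proof.
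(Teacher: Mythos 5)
Your first phase matches the paper: partition into maximal runs of small differences, invoke the inductive hypothesis to get approximating short lists, splice them in with Lemma~\ref{lem:replace-middle} at a scale near $a'$, and chain the substitutions with Lemma~\ref{lem:transitivity}. The gap is in your second phase, and it is exactly where you flagged uncertainty. Your primary route --- iterating Lemma~\ref{lem:corrupted-layers} to walk each large difference $L_\ell$ down to $\kappa^{a'+\kappa^{\kappa/1000}}$ --- cannot be repaired. The number of iterations needed is on the order of $L_\ell/(f_\del(a')\kappa^{a'-\kappa^{100}})$, which grows with $M$ and is not bounded by any function of $k$ and $\kappa$; each iteration costs $2^{-\kappa/2}$ in approximation error, so the cumulative error is not absorbed by any fixed slack. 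Worse, the injection in Lemma~\ref{lem:corrupted-layers} works by adding $\tau$ to the deletion count of a single chunk at layer $a'$, and this only remains a valid deletion instance when the total shift stays below the per-chunk deletion budget $f_\del(a')\kappa^{a'}$; shifting by $L_\ell - \kappa^{a'+\kappa^{\kappa/1000}}$, which vastly exceeds that budget, has no injective preimage, so the coupling breaks outright rather than merely accumulating error.

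The paper's resolution is the second alternative you mention but do not develop, and the key conceptual move is to \emph{stop trying to preserve the distribution of the induced indices} $(q_1,\ldots,q_r)$ in the original codeword. One replaces each large $\widehat{d_{p_i}}$ by $\kappa^{a'+\kappa^{\kappa/1000}}$ in a single step (compensating the following large difference so later queries are undisturbed) and argues only that the distribution of \emph{outputs in $\Sigma^r$} is unchanged. This works because (a) the $\kappa^{a'}$-sized block into which the next group of small-difference queries lands is within $2^{-\kappa/2}$ TV distance of uniform under either value of the large difference, by the same conditioning argument as in Lemma~\ref{lem:replace-middle}, and (b) the no-significant-layers hypothesis says precisely that the frequency of every output pattern $b$ is essentially the same in almost every $\kappa^{a'-\kappa^2}$-block as in the containing chunk at scale $\widehat{d_1}+\ldots+\widehat{d_{r-1}}+\kappa^2$, so landing in a different (near-uniformly random) block yields the same output statistics. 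The residual mismatch in the shift internal to the landing block is absorbed by adjusting deletions in the adjacent $\kappa^{a'+\kappa^{\kappa/2000}}$-scale chunks, where $f_\del$ is large by hypothesis, giving an injection that succeeds with high probability. Without this change of target --- from index distributions to output distributions --- the second phase does not go through.
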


\begin{proof}
    We can partition the queries $\widehat{d_1}\ldots \widehat{d_{r-1}}$ into groups $\widehat{d_j}\ldots \widehat{d_{j'}}$ that comprise of the small query differences (and exclude the large differences). Firstly, we can apply Lemma~\ref{lem:replace-middle} to successively replace the query differences $\widehat{d_j}\ldots \widehat{d_{j'}}$ with a list $\widehat{\sd_j}\ldots \widehat{\sd_{j'}}$ from the list of approximating queries that has no significant layers in the desired range, such that $(2k)^{j'-j+1}2^{-\kappa/100}$-approximates it. We can do this by choosing $a'$ in the lemma to be $a'+\kappa^{\kappa/1000}$, because the adjacent query differences are all larger by at least a factor of $\kappa^{\kappa^{\kappa/100}}$, and the layer $a'$ has sufficient corruption. Each iteration of this is a $(2k)^{j'-j+1}2^{-\kappa/100}$-approximation, so by Lemma~\ref{lem:transitivity}, after performing all the replacements, we are left with a $(2k)^{r-1}2^{-\kappa/100}$-approximation at worst.

    Now, the queries in the groups $\widehat{\sd_j}\ldots \widehat{\sd_{j'}}$ has essentially the same distribution of outputs in the layers $a'-\kappa^2$ and $\widehat{d_1}+\ldots+\widehat{d_{r-1}}+\kappa^2$, and in fact in all layers between. At a high level, we want to show that replacing the remaining differences (large differences) with $\kappa^{a'+\kappa}$ does not change the distribution of outputs by much because there are no significant layers for any of the lists of short queries between layers $a'-\kappa^2$ and $\widehat{d_1}+\ldots+\widehat{d_{r-1}}+\kappa^2$, and so all query differences at approximately that scale function similarly.

    Let the groups of short queries be $\widehat{d_{j_1}}\ldots \widehat{\sd_{j'_1}}, \widehat{\sd_{j_2}}\ldots \widehat{\sd_{j'_2}}, \ldots$, such that between each pair, there is a single $\widehat{d_{p_1}}$ that is large. (If there are multiple large differences in a row, the short query group between them is the empty list.)
    
    We will show that we can replace any $\widehat{d_{p_i}}$ with $\kappa^{a'+\kappa^{\kappa/1000}}$ and adjust $\widehat{d_{p_{i+1}}}$ by $\widehat{d_{p_i}}-\kappa^{a'+\kappa^{\kappa/1000}}$ without changing the output distribution. In the case where $\widehat{d_{p_i}}$ is the rightmost difference, no further value $\widehat{d_{p_{i+1}}}$ need be adjusted We remark that this does not result in the same distribution of output queries induced in the original codeword, just of outputs in $\Sigma^r$. Repeating this process for each $\widehat{d_{p_i}}$ from left to right recovers the lemma statement.

    By the same argument as in the proof of Lemma~\ref{lem:replace-middle}, the distribution of which $\kappa^{a'}$ block that $\widehat{\sd_{j_i}}\ldots \widehat{\sd_{j'_i}}$ falls into and the corruption pattern within the block and the shift is at most $2^{-\kappa/2}$ TV distance from uniform. Then, when we reduce the value of $\widehat{d_{p_i}}$, there is a change in which block this query falls in. However, the distribution of where the list of queries falls now is still approximately uniform, since the argument would've also held using $\kappa^{a'+\kappa^{\kappa/1000}}$ for $\widehat{d_{p_i}}$. Then, the distributions of outputs for the queries induced by the differences $\widehat{\sd_{j_i}}\ldots \widehat{\sd_{j'_i}}$ is the same in both cases. Moreover, there is a bijection between the shift and instantiation of $\del_{a'}$ in the block that the list of queries falls into in each case, because there are no significant layers. Since there are no significant layers, the output distribution in each length $\kappa^{a'}$ block is the same, except with small probability at most $2^{\kappa/2}$. By increasing the value of $\widehat{d_{p_{i+1}}}$ by exactly how much $\widehat{d_{p_i}}$ was decreased, the distribution of the other outputs does not change. However, the conditional distribution may change, based on the shift induced by $\widehat{\sd_{j'_i}}$, but because the outputs could be approximately bijected without adjusting the shift of $\widehat{\sd_{j'_i}}$ by more than $\kappa^{a'}$, this can be compensated for by adjusting the deletion in the preceding and following $\kappa^{a'+\kappa^{\kappa/2000}}$-sized block. Formally, like in the previous arguments, the map where these adjustments are made forms an injection that succeeds with high probability.

    In all, this final list of queries, formed by replacing all the large differences by the value $\kappa^{a'+\kappa^{\kappa/1000}}$ results in a $(2k)^{r-1}(2^{-\kappa/100})$-approximation.
\end{proof}

% \begin{lemma} \label{lem:}
%     Given a string $s \in \Sigma^*$ of length $\kappa^a$ and a layered deletion pattern $\del$ on strings of length $\kappa^a$ with total corruption at most $2^{\kappa}$, given a pair of queries $\widehat{q_1}$ and $\widehat{q_2}$, the probability that the following holds is at least $1-\frac{q_2-q_1}{\kappa^a}-2^{\kappa}$. The distribution of what 
% \end{lemma}

% \begin{proof}

% \end{proof}

\subsection{Deletion Pattern and a Representative Set of Queries} \label{sec:del-pattern}

In this section, we'll define the layered deletion pattern $\del$ used by the adversary in terms of the choice of code $\C$.

The key idea of the proof of the main theorem (that a constant number of queries don't suffice for recovering every index) is that, given the deletion pattern $\del$, there is a constant-sized representative set of lists of queries $\cF$ such that any arbitrary query list can be simulated by an element of $\cF$. That is, the output of the arbitrary list of queries and the chosen list from $\cF$ have approximately the same distribution for most codewords in the code upto a random shift of the queries. More granularly, we'll let $\cF_r$ be the subset of $\cF$ consisting of successive differences for size $r$ query lists. 

In the following definition, we'll build up $\del$ and $\cF$ simultaneously layer by layer.

\longdef{{The Deletion Pattern $\del$ and Representative Family of Query Differences $\cF$}}{del-cF}{

    \noindent We define $\del$ by assigning for all $a\in [m]$ a value to $f_\del(a)$. Simultaneously, we'll build $\cF$.

    \medskip

    \noindent \textbf{Preliminary values for layers $a\leq 2\kappa^{\kappa^2}$:}
    For $a\leq \kappa^3$ let $f_\del(a)=0$ and for $\kappa^3 < a < 2\kappa^\kappa$, let $f_\del(a) := \kappa^{-\kappa^2}$. Correspondingly, add the following elements to $\cF_r$. Define the set $S$ of important differences as follows. Add every integer from $0\ldots \kappa^3$. Next, for $\kappa^3 < a < 2\kappa^\kappa$ and $0<i\leq \kappa^{2\kappa^2}$, add $\kappa^a\cdot \left( 1+ \frac{i}{\kappa^{2\kappa^2}}\right)$. To each $\cF_r$, add every $(r-1)$-sized list where each $\widehat{d_i}$ is an element of $S$. Also, include the empty list in $\cF_1$.

    \medskip
    
    \noindent \textbf{Assigning layers $a\in [2\kappa^{\kappa^2},m-\kappa^\kappa]$ (roughly) in sequence:}
    We'll proceed one layer at a time starting from $a=\kappa^2$ to add more corruption to $\del$. Which layers are corrupted next will depend on the current $\cF$ and $\del$. For each value of $a$ ranging from $\kappa^2$ to $m$, do the following:

    \smallskip

    \begin{enumerate}
    
    \item For each list of queries in $\widehat{\sq_1}\ldots \widehat{\sq_r} \in \cF$ with $\widehat{\sd_1}+\ldots+ \widehat{\sd_{r-1}} < \kappa^{a-\kappa^\kappa/2}$ and each $b\in \Sigma^r$, decide whether $a$ is a significant layer for any $b\in \Sigma^r$ for the deletion process $\del$ (only looking at what it is so far) for the threshold $2^{-\kappa}$.
    
    \item If not, then move on to the next $a$. Otherwise, let $r$ be the smallest for which there is a query list higher than the threshold. Assign $f_\del(a'):=\kappa^{-\kappa^3 r^r}$ for all $a-\kappa^\kappa \leq a' < a+\kappa^\kappa$ (unless the already-assigned value was higher, in which case leave it).
    
    \item For all $r'>r$ (note the strict inequality), add the following query difference lists to $\cF_r$. Define the set $S_a$ of important differences as follows: for $a-\kappa^\kappa \leq a' < a+\kappa^\kappa$ and $0<i\leq \kappa^{2\kappa^2r^r}$, add in $\kappa^a\cdot \left( 1+ \frac{i}{\kappa^{2\kappa^2r^r}}\right)$. Then, choose an arbitrary difference list $\widehat{\sd_1}\ldots \widehat{\sd_t}\in \cF$ with $t<r-1$ (for example, the singleton query list $\widehat{q_1}=1$) and set $\widehat{d_1}\ldots \widehat{d_{t}}$ to those values. Then, set $\widehat{d_{t+1}}$ to an element in $S_a$, and set $\widehat{q_{t+2}}\ldots \widehat{q_{t'}}$ as an arbitrary element of $\cF$ as before shifted such that the first query is $\widehat{q_{t+1}}$). Then set the next value $\widehat{d_{t'+1}}$ to an element in $S_a$ and repeat.

    \end{enumerate}

    \medskip

    \noindent \textbf{Assigning layers $a>m-2\kappa^\kappa$:} Assign $f_\del(a):=\kappa^{-\kappa^3}$. (This part will be completed after the previous one, even though there is a slight overlap in layer.) Assign $f_\del(m):=\kappa^{-5}$.
    
}

Now, we'll prove a couple facts about $\del$ and $\cF$.

\begin{lemma}
    For each $r$, the size of $\cF_r$ is at most $\kappa^{4\kappa^2r^r}$. 
\end{lemma}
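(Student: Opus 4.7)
I would prove the bound by induction on $r$. The base case $r=1$ is immediate: $\cF_1$ contains only the empty list, so $|\cF_1|=1 \leq \kappa^{4\kappa^2}$. For the inductive step, I would split the analysis according to the two phases by which $\cF_r$ is populated in Definition~\ref{def:del-cF}: the preliminary batch (using $S$), and the sequential batch (adding at certain layers $a$).

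For the preliminary batch, I would first bound $|S|$. The integer portion contributes $\kappa^3+1$ values, and each of the fewer than $2\kappa^\kappa$ layers $a \in (\kappa^3, 2\kappa^\kappa)$ contributes at most $\kappa^{2\kappa^2}$ values, so $|S| \leq \kappa^{3\kappa^2}$ for $\kappa$ large enough. Since the preliminary lists in $\cF_r$ are $(r-1)$-tuples drawn from $S$, the preliminary contribution is at most $|S|^{r-1} \leq \kappa^{3\kappa^2(r-1)}$, well below the target.

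For the sequential batch, I would bound the per-layer contribution and then the number of contributing layers. A layer $a$ contributes to $\cF_r$ exactly when $r_a < r$, where $r_a$ is the smallest arity at which some list in the current $\cF$ causes a significance event. Fix such an $a$. First, $|S_a| \leq 2\kappa^\kappa \cdot \kappa^{2\kappa^2 r_a^{r_a}} \leq \kappa^{3\kappa^2(r-1)^{r-1}}$. Next, every new list added to $\cF_r$ at this layer is obtained by an alternating composition of $k$ elements of $S_a$ with $k+1$ chunks drawn from $\cF_1 \cup \cdots \cup \cF_{r_a-1}$, whose difference-lengths sum to $r-1-k$. Summing over $k \in \{0,\ldots,r-1\}$ and the at most $2^{r-1}$ such compositions, and using the inductive bound $|\cF_j|\leq \kappa^{4\kappa^2 j^j}$ for $j<r$, each layer produces at most
\[
  2^{r-1}\cdot |S_a|^{r-1} \cdot \bigl(\max_{j<r}|\cF_j|\bigr)^{r} \;\leq\; \kappa^{\,C_1\,\kappa^2 r(r-1)^{r-1}}
\]
for a modest constant $C_1$, and $r(r-1)^{r-1} \leq r^r$.

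Finally, I would bound the number of contributing layers. By Lemma~\ref{lem:const-sig-layers}, each list already in $\cF$ has at most $\kappa^\kappa + 2^{3\kappa} \leq 2\kappa^\kappa$ significant layers at threshold $2^{-\kappa}$, so the number of layers $a$ at which anything is added to $\cF_r$ is at most $2\kappa^\kappa \cdot \sum_{r''<r}|\cF_{r''}| \leq 2\kappa^\kappa \cdot r \cdot \kappa^{4\kappa^2(r-1)^{r-1}}$ by induction. Multiplying by the per-layer bound and adding the preliminary contribution gives a total exponent of the form $C_1\kappa^2 r^r + 4\kappa^2 (r-1)^{r-1} + O(\kappa\log r)$. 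Since $(r-1)^{r-1} \leq r^r / r$, this exponent is bounded by $4\kappa^2 r^r$ provided the constants work out; the main technical obstacle is precisely ensuring that the constant $C_1$ in the per-layer bound stays small enough that, when combined with the $4\kappa^2(r-1)^{r-1}$ term from the layer count, the sum remains at most $4\kappa^2 r^r$. If the cleanest composition-counting argument does not quite achieve this, I would sharpen by observing that $k=r-1$ (all $S_a$-elements, empty chunks) actually dominates, which gives exponent $(r-1)\cdot 2\kappa^2(r-1)^{r-1} = 2\kappa^2(r-1)^r \leq 2\kappa^2 r^r$, leaving ample slack to absorb all other terms into $4\kappa^2 r^r$ and close the induction.
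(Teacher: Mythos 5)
Your proposal follows essentially the same route as the paper's own proof: induction on $r$, bounding $|S|$ and the preliminary batch by $|S|^{r-1}$, then bounding the sequential batch by counting compositions of $S_a$-elements with lower-arity lists and multiplying by the number of significant layers via Lemma~\ref{lem:const-sig-layers}. If anything you are more careful than the paper (which omits the factor counting the number of lower-arity lists when bounding the number of contributing layers, and is loose with the final constant), so this is correct and consistent with the intended argument.
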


\begin{proof}
    We'll show the claim by induction on $r$. For $r=1$, the only query we have added is $q_1=1$ so there is only one element.
    
    Now we'll show the claim for $r$ if it's true for all $r'<r$. 
    
    The first elements added to $\cF_r$ were determined by a set $S$ defined as every integer from $0\ldots \kappa^3$ and for $\kappa^3 < a < 2\kappa^{\kappa}$ and $0<i\leq \kappa^{2\kappa^2}$, adding an element. The size of this set is less than $\kappa^{3\kappa^2}$. Then, each of the $r-1$ differences between adjacent queries may be chosen from this set, for a total of $\left( \kappa^{3\kappa^2} \right)^r$ possible lists of queries. Note for clarity that this part did not involve the inductive hypothesis.

    In the second way of adding elements to $\cF_r$, we first find a value of $a$ such that the layer is significant for one of the queries in $\cF_{r'}$ with $r'<r$. By Lemma~\ref{lem:const-sig-layers}, for each query list, there are at most $\kappa^{2\kappa}$ values of $a$. In each, when we add new queries of length $r$, we first partition $r$ into the sum of $r_1\ldots r_i$ where the sum is $r$ and each $1\leq r_\iota < r$. Letting $N_{r'}$ denote the size of $\cF_{r'}$, this quantity is at most
    \[
        \prod_{\iota=1}^i |S_a|\cdot N_{r_\iota} \leq \left(\kappa^{4\kappa^2(r-1)^{r-1}}\right)^r\cdot \left(\kappa^{3\kappa^2}\right)^r \leq \kappa^{\kappa^2r^r}.
    \]

    The number of options for $r_1\ldots r_i$ and the number of significant layers $\kappa^{2\kappa}$ increases this by at most a factor of $\kappa^{\kappa^2}$.
\end{proof}

\begin{lemma} \label{lem:total-corr}
    The total maximum amount of corruption in $\del$, aside from layer $m$, is less than $2^{-\kappa}$.
\end{lemma}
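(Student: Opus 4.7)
The plan is to split $\sum_{a < m} f_\del(a)$ into three sums according to the three stages in the construction of $\del$---preliminary, main sequential assignment, and tail---and bound each separately. The preliminary and tail pieces fall out from the definition directly; the real work is in the main stage.

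For the preliminary layers $\kappa^3 < a < 2\kappa^\kappa$, each has $f_\del(a) = \kappa^{-\kappa^2}$, so their contribution is at most $2\kappa^\kappa \cdot \kappa^{-\kappa^2} \ll 2^{-\kappa}/3$. For the tail layers $m - 2\kappa^\kappa < a < m$, each has $f_\del(a) \leq \kappa^{-\kappa^3}$, so their contribution is at most $2\kappa^\kappa \cdot \kappa^{-\kappa^3} \ll 2^{-\kappa}/3$. Both of these are immediate arithmetic once the ranges and values are read off from the definition of $\del$.

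For the main stage, the first thing I would address is an apparent circularity: \Cref{lem:const-sig-layers} bounds the number of significant layers only under the hypothesis that the total corruption of $\del$ is itself at most $2^{-\kappa}$---which is what we are trying to prove. I would resolve this by an induction on the construction step $a$, maintaining the invariant that at every intermediate $a$, the partially-built $\del$ has total corruption below $2^{-\kappa}/2$. Under this invariant, \Cref{lem:const-sig-layers} (applied with threshold just above $2^{-\kappa}$) guarantees that for each query difference list in $\cF_r$ and each output pattern $b \in \Sigma^r$, at most $\kappa^\kappa + 2^{3\kappa}$ layers are ever significant. Each significance event raises $f_\del(a')$ to $\kappa^{-\kappa^3 r^r}$ on a window of $2\kappa^\kappa$ consecutive layers, so the cumulative main-stage corruption is bounded by
\[
\sum_{r=1}^{k} |\cF_r| \cdot |\Sigma|^r \cdot (\kappa^\kappa + 2^{3\kappa}) \cdot 2\kappa^\kappa \cdot \kappa^{-\kappa^3 r^r}.
\]

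Substituting $|\cF_r| \leq \kappa^{4\kappa^2 r^r}$ from the preceding lemma and using $\kappa \geq \eps^{-k|\Sigma|}$ large enough that $\kappa^3 r^r$ dominates both $4\kappa^2 r^r$ and the remaining $\poly(\kappa) \cdot |\Sigma|^r$ factors, the $r$-th summand is much less than $2^{-2\kappa}$; summing the finitely many terms $r = 1, \ldots, k$ yields a main-stage contribution $\ll 2^{-\kappa}/3$. This preserves the inductive invariant at every step and, combined with the preliminary and tail bounds above, proves the lemma. The only real obstacle is checking the arithmetic in the main-stage bound: the exponent $-\kappa^3 r^r$ in the definition of $\del$ was calibrated precisely so that its $\kappa^3$ swamps the $\kappa^{4\kappa^2 r^r}$ growth of $|\cF_r|$, which is what makes the sum over $r$ collapse against the super-exponentially growing family size.
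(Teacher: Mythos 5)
Your proof is correct and follows essentially the same route as the paper: split the corruption into the preliminary, main-stage, and tail layers, bound the first and last by direct arithmetic, and bound the main stage by (number of query lists in $\cF_r$) $\times$ (number of significant layers) $\times$ (window width) $\times$ $\kappa^{-\kappa^3 r^r}$, summed over $r$. Your explicit handling of the circularity with \Cref{lem:const-sig-layers} via an induction on the construction step is a point the paper glosses over, but it does not change the substance of the argument.
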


\begin{proof}
    Between layers $0$ to $2\kappa^{\kappa}$, the amount of corruption in each layer is at most $\kappa^{-\kappa^2}$, so the total corruption is bounded by $2\kappa^\kappa \cdot \kappa^{-\kappa^2} < 2^{-\kappa-2}$.

    Afterwards, each query list in $\cF_r$ contributes at most $\kappa^{-\kappa^3 r^r}$ corruption for each of the $\kappa^{4\kappa^2 r^r}$ significant layers, and there are in total $\kappa^{\kappa^2r^r}$ queries. Multiplying these quantities, the total corruption is at most $\kappa^{-\kappa^3 r^r/2}$. Adding over $r=1$ to $k$, the total is less than $2^{-\kappa-1}$. The third type of deletion contributes at most $2^{-\kappa-2}$ corruption, and thus the total corruption is at most $2^{-\kappa}$.
\end{proof}

\subsection{Main Lemma} \label{sec:main-lemma}

Let $\cF$ be the family of representative queries and $\del$ the layered deletion pattern.

\begin{lemma} \label{lem:main}
    For any query difference list $\widehat{d_1}\ldots \widehat{d_{r-1}}$ for $r\leq k+1$, define $\ell:=\lfloor \log_\kappa (\widehat{d_1}+\ldots+ \widehat{d_{r-1}})\cdot \kappa^2 \rfloor$. If $\ell<m-1.5\kappa^\kappa$, there exists a member of $\cF_r$ such that $\widehat{\sd_1}\ldots \widehat{\sd_{r-1}}$ $2^{-\kappa/1000+r}$-approximates $\widehat{d_1}\ldots \widehat{d_{r-1}}$ on the deletion pattern $\del_\ell$ for at least $1-\kappa^{-100}$ fraction of $z\in \text{Im}(\C)$.
\end{lemma}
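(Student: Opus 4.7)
The plan is strong induction on $r$. The base case $r=1$ is immediate: $\cF_1$ consists of the empty difference list, and for a singleton query the approximation condition reduces to matching shift-averaged single-symbol frequencies, which are identical. The inductive hypothesis, applied to each $r' < r$, supplies approximating elements of $\cF_{r'}$ (on at least a $1-\kappa^{-100}$ fraction of codewords) for every sublist of length $r'-1$ whose total width satisfies the stated $\ell'<m-1.5\kappa^\kappa$ bound.

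For the inductive step, fix a list $\widehat{d_1}\ldots\widehat{d_{r-1}}$. The central observation is that Definition~\ref{def:del-cF} constructs $\cF_r$ by gluing shorter $\cF_{r_\iota}$-elements with ``separator'' values drawn from discretizations $S$ or $S_a$ at every scale $a$ ever flagged as significant during the iterative construction; so one approximates $\widehat{d_1}\ldots\widehat{d_{r-1}}$ by mirroring that decomposition in reverse. First I would choose a scale $a'$ such that every $\widehat{d_i}$ is either ``small'' ($\leq \kappa^{a'-\kappa^{\kappa/100}}$) or ``large'' ($\geq \kappa^{a'+\kappa^{\kappa/100}}$), and such that $a'$ either has no significant layer nearby or coincides with a layer the construction has already processed; Lemma~\ref{lem:const-sig-layers} bounds the number of significant layers, so such a scale exists by pigeonhole on the $\widehat{d_i}$'s. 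Next I apply Lemma~\ref{lem:no-sig-layers} to replace each ``large'' $\widehat{d_i}$ by the canonical value $\kappa^{a'+\kappa^{\kappa/1000}}$; its hypothesis on lists of length $\leq r-2$ is furnished by the inductive hypothesis at the sub-scale $a'$. Each maximal run of ``small'' differences has length $r_\iota < r$, and the inductive hypothesis yields an approximating element in $\cF_{r_\iota}$; Lemma~\ref{lem:replace-middle} splices this back into the list. Finally, Lemma~\ref{lem:corrupted-layers} snaps the now-canonical large values onto the discretized grid $S_{a'}$, which by construction is the separator alphabet that $\cF_r$ is built to accept, using the positivity of $f_\del(a'+\kappa^{\kappa/1000})$ guaranteed by Definition~\ref{def:del-cF}.

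The main obstacle is controlling the accumulation of approximation errors and the union of ``bad'' codeword events across these $O(k)$ nested operations. By Lemma~\ref{lem:transitivity} the approximation errors add; each invocation of Lemma~\ref{lem:replace-middle} contributes at most $(2k)^{r_\iota}2^{-\kappa/100}$, each of Lemma~\ref{lem:corrupted-layers} at most $2^{-\kappa/2}$, and each recursive call fails on at most a $\kappa^{-100}$ fraction of codewords, so a union bound over the $O(r)\leq O(k)$ steps still leaves the total failure fraction well below $\kappa^{-100}$ (since $\kappa \geq \eps^{-k|\Sigma|}$ is enormous), and the total approximation error stays under $2^{-\kappa/1000+r}$. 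The subtler point will be justifying that the chosen $a'$ actually corresponds to elements already placed in $\cF_r$: this rests on the ``closure'' property of the iterative construction in Definition~\ref{def:del-cF}, which revisits every layer flagged significant for any member of $\cF$ built so far, so by the time the construction terminates every scale required by the recursive decomposition has had its discretization $S_a$ woven into $\cF_r$, and the resulting spliced list indeed matches the template the construction produces.
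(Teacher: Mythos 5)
Your overall strategy---induction on $r$, splicing inductive approximations for runs of small differences via Lemma~\ref{lem:replace-middle}, canonicalizing large differences, and snapping to the discretized grids with Lemma~\ref{lem:corrupted-layers}, with errors accumulated by Lemma~\ref{lem:transitivity}---is the same as the paper's. However, there is a gap in how you dispatch the large differences. You apply Lemma~\ref{lem:no-sig-layers} unconditionally to replace every large $\widehat{d_i}$ by $\kappa^{a'+\kappa^{\kappa/1000}}$, but that lemma's hypothesis requires that there be \emph{no} significant layers between $a'-\kappa^2$ and $\widehat{d_1}+\ldots+\widehat{d_{r-1}}+\kappa^2$. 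Precisely when some layer $a$ near $c:=\lfloor \log_\kappa(\widehat{d_1}+\ldots+\widehat{d_{r-1}})\rfloor$ \emph{is} significant for some shorter list in $\cF_{r'}$, that hypothesis fails and Lemma~\ref{lem:no-sig-layers} is unavailable. The paper splits into two cases: when such a significant layer $a$ exists, the construction has already injected corruption at $a$ and woven the grid $S_a$ into $\cF_r$, so the large differences are kept at their own scale and merely snapped to $S_a$ via Lemma~\ref{lem:corrupted-layers}; only when no nearby layer is significant does one invoke Lemma~\ref{lem:no-sig-layers}, pulling the large differences down to just above the largest significant layer $a<c$, which then \emph{reduces} to the first case. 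Your hedge that $a'$ ``either has no significant layer nearby or coincides with a layer the construction has already processed'' acknowledges the dichotomy but your pipeline does not branch on it, and in the significant-layer branch the step as written would fail. Relatedly, you attribute the hypotheses of Lemma~\ref{lem:no-sig-layers} to the inductive hypothesis; in fact the inductive hypothesis only supplies the approximating-sublists condition, while the no-significant-layers condition must come from the choice of $a'$ relative to the set of significant layers (and is transferred to most individual codewords by a Markov argument, which you omit).

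Two smaller omissions: you do not separately treat the regime of very small total width ($c<2\kappa^{\kappa^2}$), where the preliminary grid $S$ rather than any $S_a$ is used and no significant-layer machinery is needed; and in the significant-layer branch one must check that the layer $a$ actually received corruption during the iterative construction, i.e.\ that it was significant with respect to the partially built $\del$ at the time it was processed rather than only with respect to the final $\del$---the paper argues this explicitly, and your appeal to a generic ``closure property'' of Definition~\ref{def:del-cF} does not address it.
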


\begin{proof}

To show this claim, we'll induct on $r$.

For the base case of $r=1$, the only possible $0$ element list is the empty list, which is in $\cF_1$. Any single query $0$-approximates itself on all strings of length $M$.

We will now prove the statement for $r$ assuming it holds for all $r'<r$. Consider any query difference list $\widehat{d_1}\ldots \widehat{d_{r-1}}$. Throughout the proof, let's denote $c:= \lfloor \log_\kappa (\widehat{d_1}+\ldots+ \widehat{d_{r-1}}) \rfloor$. We separate the proof into three cases.

\paragraph{Case 1: $c<2\kappa^{\kappa^2}$.}

Consider the list of differences $\widehat{d_1}\ldots \widehat{d_{r-1}}$ each of which is less than $2^c$. In the first list of lists we added to $\cF$, we constructed a set $S$ of candidate differences. By the method by which the set was constructed, for every $\widehat{d_i} > \kappa^3$, there is $s\in S$ such that $\tau:=\widehat{d_i}-s < \kappa^{-2\kappa^2}$. Moreover, the layer $a:=\log(\widehat{d_i})-\kappa^{100}$ is larger than $0$, and has $\kappa^{-\kappa^2}>\kappa^{\kappa^3r^r}$ corruption, so we can apply Lemma~\ref{lem:corrupted-layers} to replace $\widehat{d_i}$ with $s$ in the query. Doing this for each $\widehat{d_i} > \kappa^3$, iteratively creates queries that are $2^{-\kappa/2}$ approximations. By Lemma~\ref{lem:transitivity}, the final query upon all the replacements is a $2^{-\kappa/2}k$ approximation, and since all the differences are now in $S$, the list is in $\cF$ by construction.

%\paragraph{Case 1: For all $c-\frac{\kappa^\kappa}{2} < a < c+\frac{\kappa^\kappa}{2}$ it holds that $f_\del(a) > \kappa^{-\kappa^3r^r}$.} 

\paragraph{Case 2: There exists $\widehat{\sd'_1}\ldots \widehat{\sd'_{r'}} \in \cF_{r'}$ for some $r'<r$ such that $a$ is a significant layer for the threshold $2^{-\kappa}$ for some $c-\frac{\kappa^\kappa}{10} < a < c+\frac{\kappa^\kappa}{10}$.} \mbox{}\\

The layer $a$ was corrupted with corruption at least $\kappa^{-\kappa^3 r^r}$. To show this, it suffices that $a$ was significant layer for $\widehat{\sd'_1}\ldots \widehat{\sd'_{r'}}$ when we reached layer $a$ in the definition process, not only for the final value of $\del$. $a$ is a significant layer for $\widehat{\sd'_1}\ldots \widehat{\sd'_{r'}}$ if $a$ is within $10\kappa^\kappa$ queries above $a':=\lfloor\sd'_1+\ldots+\sd'_{r'}\rfloor$. If not, then the first $a'$ layers have had their full corruptions finalized by the time layer $a$ had been reached, and whether something is a significant layer for queries $\widehat{\sd'_1}\ldots \widehat{\sd'_{r'}}$ depends only on the first $a'$ layers of corruption.

Now, we'll find the list $\widehat{\sd_1}\ldots \widehat{\sd_r} \in \cF_r$ that approximates $\widehat{d_1}\ldots \widehat{d_{r-1}}$.

Select a threshold layer $a-\frac{\kappa^\kappa}{10} < a' < a$ such that no value of $\log_\kappa{\widehat{d_i}}$ falls between $a'-\kappa^{\kappa/20}$ and $a'+\kappa^{\kappa/20}$. This exists because there are only $r\leq k$ differences and $\frac{\kappa^\kappa}{10}$ layers.

When describing the corruption process for $a$ as a significant layer, we constructed the set $S_a$ of differences. The set was constructed so that for every integer between $\kappa^{a-\kappa^\kappa}$ and $\kappa^{a+\kappa^\kappa}$, there is $s\in S_a$ that is at most a factor of $\kappa^{-2\kappa^2r^r}$ smaller. This requires that $\kappa^{a-\kappa^\kappa} > \kappa^{\kappa^2r^r}$ due to approximation issues. This holds as long as $a>\kappa^\kappa+{\kappa^2r^r}$, which is true since $c>2\kappa^{\kappa^2}$. 

Partition the list $\widehat{d_1}\ldots \widehat{d_{r-1}}$ into groups of consecutive elements as follows: starting from the first $\widehat{d_i}$, let a group be all the following consecutive elements satisfying $\widehat{d_i} < \kappa^{a'}$. If there are none, just let it be the next element that is larger. As such, groups of elements are either a single large element or a consecutive list of small elements. 

From left to right, if the next group of query differences is a single element larger than $\kappa^{a'}$, then by Lemma~\ref{lem:corrupted-layers}, for every $z \in \text{Im}(\C)$, there exists a way to replace it with an element of $S_a$ such that the new query approximates the old one on all strings of length $\kappa^m$.

If the next group of query differences (length $r'$) is a group of elements all smaller than $\kappa^{a'-\kappa^{\kappa/20}}$, then the conditions of Lemma~\ref{lem:replace-middle} are satisfied, so we can replace the group of query differences with one that approximates it in $\cF_{r'}$ by the inductive hypothesis. On all $z\in \text{Im}(\C)$ where the approximation held, which was $(1-(2k)^{r-1}2^{\kappa/2})$ fraction at least, the approximation holds for the new list of queries as well by Lemma~\ref{lem:replace-middle}.

In all, this new list of query differences has been constructed using elements of $\cF_{r'}$ and with differences in $S_a$. As such, this list of differences is in $\cF_{r}$ by construction.

Since each successive change to the query provided a $2^{-\kappa/100}k$-approximation of the previous step, this means that $\widehat{\sd_1}\ldots \widehat{\sd_{r-1}}$ $2^{-\kappa/100}k$-approximates the query $\widehat{d_1} \ldots \widehat{d_{r-1}}$ on $1-(2k)^r/2\cdot 2^{\kappa/2}$ of the values of $z\in \text{Im}(\C)$ by Lemma~\ref{lem:transitivity} and a union bound on the deletion pattern .

\paragraph{Case 3: For all $\widehat{\sq'_1}\ldots \widehat{\sq'_{r'}} \in \cF_{r'}$ for some $r'<r$, it holds that $a$ is not a significant layer for the threshold $2^{-\kappa}$ for any $c-\frac{\kappa^\kappa}{10} < a < c+\frac{\kappa^\kappa}{10}$.} \mbox{} \\

Define the layer $a$ to be the largest significant layer for some $\widehat{\sq'_1}\ldots \widehat{\sq'_{r'}} \in \cF_{r'}$ for some $r'<r$ that is smaller than $c$. 

Similar to the previous case, select a threshold layer $a+\frac{\kappa^\kappa}{50} < a' < a+\frac{\kappa^\kappa}{20}$ such that no value of $\log_\kappa{\widehat{d_i}}$ falls between $a'-\kappa^{\kappa/80}$ and $a'$. Then, for at least $2^{-\kappa/2}$ fraction of values of $z\in \text{Im}(C)$, the following holds by Markov's inequality and the definition of a significant layer.
    \[
        \mathop{\bbE}_{j \in \left[\frac{M}{\kappa^{a'-\kappa^2}}\right]} \left[ ~
        \left| 
        \begin{aligned}
            \freq{b}{C_{j',\kappa^{\widehat{d_1}+\ldots+\widehat{d_{r-1}}+\kappa^2}}(z)}{\widehat{d_1}\ldots \widehat{d_{r-1}}}{\del_{\tau | \widehat{d_1}+\ldots+\widehat{d_{r-1}}+\kappa^2}} \\
            - \freq{b}{C_{j,\kappa^{a'-\kappa^2}}(z)}{\widehat{d_1}\ldots \widehat{d_{r-1}}}{\del_{\tau | a'-\kappa^2}}
        \end{aligned}
        \right|
        ~ \right] < 2^{-\kappa/2}
    \]
    where $j' = \left\lfloor \frac{i}{\kappa^{\widehat{d_1}+\ldots+\widehat{d_{r-1}}+\kappa^2-a'-\kappa^2}} \right\rfloor$.

Now, we can apply Lemma~\ref{lem:no-sig-layers}, and we can replace all the $\widehat{d_i}$ that are large with precisely the value $\kappa^{a'}$. This new query $1-(2k)^r/2\cdot 2^{\kappa/2}$ approximates the old one on $1-(2k)^r/2\cdot2^{\kappa/2}$ values of $z\in \text{Im}(\C)$, and additionally now falls into Case 2. Therefore, there is a query that approximates this query, and ultimately $1-(2k)^r\cdot 2^{\kappa/2}$ approximates the original query on $1-(2k)^r\cdot2^{\kappa/2}$ values of $z\in \text{Im}(\C)$ on the deletion pattern $\del_a$.
\end{proof}

\begin{definition} [$\cG$ and $\cQ$]
    Let $\cG \subset [M]$ be the differences that appear in some list in $\cF$. This set $\cG$ is constant-sized. Moreover, add every multiple of $\kappa^{m-1.75\kappa^\kappa}$ that is at most $M$ to the set $\cG$.
    
    Let $\cQ$ be constructed as follows. For every tuple $g_1,\ldots g_k\in \cG$, let $\widehat{\sq_1}:=g_1, \widehat{\sq_2}:=g_1+g_2\ldots$, and add this list of queries to $\cQ$. This set $\cQ$ is constant-sized.
\end{definition}

The following lemma says that querying $\widehat{q_1}\ldots \widehat{q_k}$ will produce essentially the same output as some $\widehat{\sq_1}\ldots \widehat{\sq_k} \in \cQ$. 

\begin{lemma} \label{lem:main-cor}
    For any (ordered) query list $\widehat{q_1}\ldots \widehat{q_k} \in [M]^k$ and for at least $1-\kappa^{-4}$ fraction of $z\in \text{Im}(\C)$, there exists a list of queries $\widehat{\sq_1}\ldots \widehat{\sq_k} \in \cQ$ such that $\widehat{\sq_1} \ldots \widehat{\sq_k}$ approximates $\widehat{q_1}\ldots \widehat{q_k}$ for the deletion pattern $\del$, meaning the following. For any $b\in \Sigma^k$, 
    \[
        \Bigg| \bbP \Big[ z(\cD[\widehat{\sq_1}\ldots \widehat{\sq_k}]) = b \Big] - \bbP \Big[ z(\cD[\widehat{q_1}\ldots \widehat{q_k}]) = b \Big] \Bigg| <\kappa^{-4}.
    \]
\end{lemma}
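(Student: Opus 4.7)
The plan is to derive Lemma~\ref{lem:main-cor} from Lemma~\ref{lem:main} in two moves: apply Lemma~\ref{lem:main} to the query differences $\widehat{d_i} := \widehat{q_{i+1}} - \widehat{q_i}$ to obtain an approximating difference list in $\cF_k$, and separately round the starting index $\widehat{q_1}$ to the nearest multiple of $\kappa^{m-1.75\kappa^\kappa}$ (which lies in $\cG$) to obtain $\widehat{\sq_1}$. Setting $\widehat{\sq_{i+1}} := \widehat{\sq_i} + \widehat{\sd_i}$, the output list $(\widehat{\sq_1}, \ldots, \widehat{\sq_k})$ then lies in $\cQ$ by construction, since each $\widehat{\sd_i}$ appears as a difference in some element of $\cF$ and $\widehat{\sq_1}$ is a multiple of $\kappa^{m-1.75\kappa^\kappa}$, so both are in $\cG$. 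Writing $\ell := \lfloor \log_\kappa((\widehat{d_1} + \ldots + \widehat{d_{k-1}})\cdot \kappa^2) \rfloor$, the two regimes to handle are $\ell < m - 1.5\kappa^\kappa$ (where Lemma~\ref{lem:main} applies directly to the whole difference list) and $\ell \geq m - 1.5\kappa^\kappa$ (where one must decompose around the large differences).

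In the small-sum regime, Lemma~\ref{lem:main} immediately supplies $\widehat{\sd_1}, \ldots, \widehat{\sd_{k-1}} \in \cF_k$ that $2^{-\kappa/1000+k}$-approximates $\widehat{d_1}, \ldots, \widehat{d_{k-1}}$ on $\del_\ell$ for all but a $\kappa^{-100}$ fraction of $z \in \text{Im}(\C)$. The next step is to translate this chunk-based approximation on $\del_\ell$ into a full-codeword probability approximation on $\del$. The top layer of $\del$, with corruption rate $f_\del(m) = \kappa^{-5}$, supplies a uniformly random global shift of up to $\kappa^{-5}M$ bits, which vastly exceeds both $\kappa^\ell$ and the rounding gap $\widehat{q_1} - \widehat{\sq_1} \leq \kappa^{m-1.75\kappa^\kappa}$. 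Conditioning on layers $\leq \ell$ of $\del$, the induced index $q_1$ (and separately $\sq_1$) is therefore near-uniform modulo $\kappa^\ell$, so $\bbP[z(\cD[\widehat{q_1}, \ldots, \widehat{q_k}]) = b]$ factors as a near-uniform average of per-chunk shift-invariant frequencies $\freq{b}{C_{j,\kappa^\ell}(z)}{\widehat{d_1}\ldots \widehat{d_{k-1}}}{\del_\ell}$, matching precisely the quantity Lemma~\ref{lem:main} controls, and likewise for the tilded queries. In the large-sum regime the difference list is partitioned into maximal runs of \emph{small} differences (those below $\kappa^{m-1.75\kappa^\kappa}$) separated by \emph{large} ones; Lemma~\ref{lem:main} approximates each small run (of length less than $k$ and small total width), and Lemma~\ref{lem:replace-middle} together with Lemma~\ref{lem:no-sig-layers} allow reassembly after rounding the large differences to multiples of $\kappa^{m-1.75\kappa^\kappa}$ from $\cG$. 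Lemma~\ref{lem:transitivity} then combines the chain of approximation errors into a single $\kappa^{-4}$ bound.

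The main obstacle is executing the chunk-to-codeword bridge rigorously. The key technical step is an injective-mapping argument in the style of the proofs of Lemma~\ref{lem:corrupted-layers} and Lemma~\ref{lem:replace-middle}: one builds a bijection (up to failure probability controlled by the relevant corruption rates) between deletion samples for $(\widehat{q_1}, \ldots, \widehat{q_k})$ and deletion samples for $(\widehat{\sq_1}, \ldots, \widehat{\sq_k})$ on the full $\del$, exploiting the large random layer-$m$ shift to absorb the $\widehat{q_1} \to \widehat{\sq_1}$ rounding and the intermediate random layers to absorb chunk-boundary effects. Because Lemma~\ref{lem:main}'s chunk-approximation error $2^{-\kappa/1000+k}$ and its failure fraction $\kappa^{-100}$ are both much smaller than the $\kappa^{-4}$ target, there is ample slack to absorb the additional approximation error introduced by this bridge and by the large-sum case's segmented reassembly, so the bookkeeping, while delicate, should go through.
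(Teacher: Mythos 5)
Your proposal is correct and follows essentially the same route as the paper's own proof: split the differences into small runs and large gaps around a threshold layer near $m-1.7\kappa^\kappa$, approximate the small runs via Lemma~\ref{lem:main} and reassemble with Lemmas~\ref{lem:replace-middle} and~\ref{lem:corrupted-layers}, and use the layer-$m$ corruption ($f_\del(m)=\kappa^{-5}$, giving a shift of up to $\kappa^{m-5}$) to convert the per-chunk shift-invariant guarantee of Lemma~\ref{lem:main} into a bound on the absolute probabilities. The chunk-to-codeword bridge you flag as the delicate step is handled in the paper by exactly the injective-mapping argument you describe, so there is no substantive difference between the two arguments.
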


\begin{proof}
    Given a list of queries $\widehat{q_1}\ldots \widehat{q_k} \in [M]^k$, separate the differences $\widehat{d_1}\ldots \widehat{d_{k-1}}$ as follows. Find a layer $a\in [m-1.7\kappa^\kappa,m-1.6\kappa^\kappa]$ such that no value of $\log_\kappa\widehat{d_i}$ falls between $a-\kappa^{\kappa/20}$ and $a+\kappa^{\kappa/20}$.

    Using this layer, we can designate the small and large differences. At a high level, we'll show that replacing sequences of small differences with ones that approximate them on some layer $c<a'$ does not change the output distribution by much, and adjusting large differences by $\kappa^{m-1.75\kappa^\kappa}$. The main difference between this situation and Case 2 of Lemma~\ref{lem:main} is that we won't desire to shift the entire query by uniformly random amount in $[\kappa^m]$; only by an amount in $[0,\kappa^{m-5}$, since that is the corruption of layer $m$.

    By Lemma~\ref{lem:main}, for at least $1-\kappa^{-100}$ fraction of the $z\in \text{Im}(\C)$, a given short sequence of differences can be $2^{-\kappa/2000}$-approximated by one in $\cG$. In particular, this means that for at least $1-\kappa^{-90}$ of the $z\in \text{Im}(\C)$, a $2^{\kappa/5000}$-approximation holds on all of the $\kappa^{m-10}$-sized chunks.

    The shift of the queries dictated by the deletion at layer $m$ is $\kappa^{m-5}$. Therefore, when a random shift (deletion at layer $m$ is chosen), except for probability of failure at most $\kappa^{-5}$, this deletion amount can be shifted by $\kappa^{m-10}$. In other words, except with probability at most $\kappa^{-5}$, the shift will be uniform within this chunk. 

    This allows us to apply Lemma~\ref{lem:replace-middle} on the layer $a$ to replace sequences of short queries with a sequence in $\cG$, and since the output given a uniform shift is a $2^{-\kappa/1000}$-approximation, it is in this situation as well, since except with probability $1-\kappa^{-5}$ the query could be simulated with a uniform shift. Next, we can apply Lemma~\ref{lem:corrupted-layers} to adjust the large queries slightly such that differences of successive queries are multiples of $\kappa^{m-1.75\kappa^\kappa}$. In total, the probability of this new query output not appropriately simulating the original $\widehat{q_1}\ldots \widehat{q_k}$ is at most $\kappa^{-4}$, proving the lemma.
\end{proof}

\mnote{Main notes to self about proof: $r^r$ can be removed, and in general, constants can likely be simplified. Proofs of 3 main lemmas and the last one need to be much more rigorous and should include factored out statements. Though i think the general structure is pretty competent.}

%We remark that in a deviation from the notation of $\cQ$ in the overview, $\cQ$ represents lists of differences of queries rather than singular individual queries.

% \begin{proof}
%     Extend the list of queries $\widehat{q_1}\ldots \widehat{q_k}$ to $1,\widehat{q_1}\ldots \widehat{q_k}$ and let $\widehat{d_1}:=\widehat{q_1}-1$, $\widehat{d_2}:=\widehat{q_2}-\widehat{q_1}$, and so on. Then, let $\widehat{\sd_1}\ldots \widehat{\sd_k} \in \cF_{k+1}$ be the of differences that $2^{-\kappa/101}$-approximates $\widehat{d_1}\ldots \widehat{d_k}$ on the layer $a:= \lfloor \log_\kappa (\widehat{d_1}+\ldots+ \widehat{d_{r-1}})\cdot \kappa^2 \rfloor$. for at least $0.99$ fraction of $z\in \text{Im}(\C)$. We can find this by Lemma~\ref{lem:main}. Then, a $2^{-\kappa/1000}$-approximation also holds on layer $m$ by Lemma~\ref{lem:approx-higher-layers}.

%     This implies that for all $b\in \Sigma^{k+1}$,
%     \begin{align*}
%         \begin{aligned}
%         \Big| \freq{b}{s}{\widehat{\sd_1}\ldots \widehat{\sd_{r-1}}}{\del} - \freq{b}{s}{\widehat{d_1}\ldots \widehat{d_{r-1}}}{\del} \Big|& \\
%         + \err(s;\widehat{\sd_1}\ldots \widehat{\sd_{r-1}};\del) + \err(s;\widehat{d_1}\ldots \widehat{d_{r-1}};\del)&
%         \end{aligned} ~< \delta,
%     \end{align*}
%     which in turn means that

% \end{proof}

\subsection{Concluding Theorem~\ref{thm:main} and Corollary~\ref{cor:lcc}.}

We begin with a lemma that shows that $z\in \text{Im}(\C)$ can be compressed down to a constant $\overline{K}(\eps, |\Sigma|, k)$ length string such that outputs of any list of $k$ queries can be predicted by this compressed string.

\begin{lemma} \label{lem:compression}
    There exists a constant $\overline{K}:=\overline{K}(\eps, |\Sigma|, k)$ and a compression function $f:\text{Im}(\C) \to \Sigma^{\overline{K}}$ such that the following holds. For any $k$ queries $\widehat{q_1}\ldots\widehat{q_k}$, there is a randomized function $\dec: \Sigma^{\overline{K}} \to \Sigma^k$ such that for at least $1-\kappa^{-1}$ fraction of $z\in \text{Im}(\C)$, the distribution of $\dec(f(z))$ has TV distance at most $1-\kappa^{-1}$ from the outputs of the queries $\widehat{q_1}\ldots\widehat{q_k}$ after applying the deletion pattern $\del$.
\end{lemma}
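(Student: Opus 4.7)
The plan is to compress $z\in \text{Im}(\C)$ by storing enough statistics to reconstruct, for every fixed representative query list in $\cQ$, an approximation of its output distribution over the randomness of $\del$. Because $|\cQ|$, $|\Sigma|^k$, and the precision I will quantize to are all constants depending only on $\eps$, $|\Sigma|$, and $k$, the resulting description length $\overline{K}$ is automatically a constant; so the whole content of the lemma is that the stored statistics suffice to simulate an arbitrary query, and this is exactly what Lemma~\ref{lem:main-cor} gives me.

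Concretely, I would define $f(z)$ as follows: for every $(\widehat{\sq_1},\ldots,\widehat{\sq_k}) \in \cQ$ and every $b\in \Sigma^k$, record the nearest multiple of $\kappa^{-10}$ to the probability $\bbP\big[z(\cD[\widehat{\sq_1},\ldots,\widehat{\sq_k}]) = b\big]$ taken over the randomness of $\del$. Concatenating these rounded probabilities produces a string of length $|\cQ|\cdot |\Sigma|^k\cdot O(\log\kappa)$, which is a constant $\overline{K}(\eps,|\Sigma|,k)$. The decoder $\dec$, on input $f(z)$ together with an arbitrary query list $(\widehat{q_1},\ldots,\widehat{q_k})$, first runs the deterministic procedure implicit in the proofs of Lemma~\ref{lem:main} and Lemma~\ref{lem:main-cor} to select a canonical approximator $(\widehat{\sq_1},\ldots,\widehat{\sq_k})\in \cQ$; this procedure consults only the query list and the fixed objects $\cF$, $\cG$, $\cQ$, $\del$, not $z$. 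It then reads the stored quantized distribution for this approximator, renormalizes, and samples one element of $\Sigma^k$.

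For the error analysis, I would combine Lemma~\ref{lem:main-cor} with the quantization error. The lemma guarantees that on at least $1-\kappa^{-4}$ fraction of $z\in \text{Im}(\C)$, each per-pattern probability under the approximator differs from that under the true query by at most $\kappa^{-4}$, so their total variation distance is at most $|\Sigma|^k\cdot \kappa^{-4}$. The rounding in $f$ contributes a further $|\Sigma|^k\cdot \kappa^{-10}$, and renormalization at most doubles these errors. Since $\kappa$ dominates $\eps^{-k|\Sigma|}$ and can be taken large in the top-level theorem, the total is comfortably below $\kappa^{-1}$.

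The main obstacle is confirming that the approximator selection $(\widehat{q_1},\ldots,\widehat{q_k})\mapsto (\widehat{\sq_1},\ldots,\widehat{\sq_k})$ is genuinely $z$-independent and that the exceptional set of $z$ on which the approximation fails can be controlled uniformly in the query list. The first point requires tracing through the recursive case split of Lemma~\ref{lem:main} (Cases 1--3), fixing canonical choices whenever the proof says \emph{"there exists"} or \emph{"choose"} (e.g.\ the threshold layer $a'$, the element of $S_a$ nearest a given difference, the replacement from $\cF_{r'}$), so that the entire reduction becomes a function of the query alone. The second point is handled because the decoder only ever queries one approximator from the constant-sized set $\cQ$, so the failure event is contained in the $z$-set where \emph{some} fixed $\widehat{\sq}\in \cQ$ fails, which a union bound over $|\cQ|$ queries keeps at size at most $|\cQ|\cdot \kappa^{-4}\ll \kappa^{-1}$.
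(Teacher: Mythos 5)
Your proposal is correct and follows essentially the same route as the paper: both reduce the problem to Lemma~\ref{lem:main-cor} and the constant-sized representative set $\cQ$, so the constant bound on $\overline{K}$ comes for free and the only content is that the stored information about $\cQ$ suffices to simulate an arbitrary query list. The one difference is cosmetic --- the paper's $f$ samples $\del$ once and records the actual output symbols at every query appearing in some list of $\cQ$ (so $f$ is randomized and $\dec$ is a lookup), whereas you store quantized output distributions and let $\dec$ sample --- and your explicit concern about making the choice of approximator depend only on the query list (not on $z$) is a genuine point that the paper's one-line proof also silently relies on; note also that since the lemma quantifies ``for any $k$ queries \ldots for at least $1-\kappa^{-1}$ fraction of $z$,'' the exceptional set of $z$ may depend on the query list, so your closing union bound over $\cQ$ is not actually needed.
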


\begin{proof}
    This is essentially a restatement of Lemma~\ref{lem:main-cor}. In particular, define $f$ as follows. Apply the deletion pattern $\del$, and for every query $\widehat{\sq}$ in any list in $\cQ$, let the output of $\widehat{\sq}$ be included in the compression. Then, $\dec$ can be performed by taking the queries $\widehat{\sq_1}\ldots\widehat{\sq_k} \in \cQ$ that approximate $\widehat{q_1}\ldots\widehat{q_k}$, and looking at $f(z)$ to figure those out. The output of this has TV distance at most $1-\kappa^{-1}$ for at least $1-\kappa^{-1}$ fraction of values of $z$.
\end{proof}

\begin{lemma} \label{lem:index-from-Fk}
    If the output of all lists of queries can be deduced up to TV distance $1-\kappa^{-1}$ on $\kappa^{-1}$-fraction of strings from some randomized compression function $f:\text{Im}(\C) \to \Sigma^{\overline{K}}$, there exists a constant $K$ so that for $n>K$, there exists $i\in [n]$ such that for uniformly random $x$, the index $x[i]$ cannot be guessed correctly from any query list with probability more than $\frac12+\eps^2$.
\end{lemma}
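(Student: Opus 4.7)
The plan is a standard information\nobreakdash-theoretic argument. By hypothesis, every $k$-query list's distribution of answers is approximable by a randomized function of the constant\nobreakdash-length compression $f(\C(x)) \in \Sigma^{\overline{K}}$, so $f$ captures essentially everything a $k$-query decoder could learn from $\widehat{\C(x)}$. But $f$ carries at most $\overline{K}\log|\Sigma|$ bits of information about $x$; hence when $n \gg \overline{K}$, most coordinates of a uniformly random $x$ must be nearly independent of $f(\C(x))$ and therefore nearly unpredictable.

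Concretely, I would take $x$ uniform on $\Sigma^n$ so that its coordinates are mutually independent. Combining the chain rule for mutual information with this independence and with $H(f(\C(x))) \leq \overline{K}\log|\Sigma|$ gives
\[
    \sum_{i=1}^{n} I\!\bigl(x[i];\, f(\C(x))\bigr) \;\leq\; I\!\bigl(x;\, f(\C(x))\bigr) \;\leq\; \overline{K}\log|\Sigma|.
\]
Averaging over $i$ yields some $i^{\star}\in [n]$ with $I(x[i^{\star}]; f(\C(x))) \leq \overline{K}\log|\Sigma|/n$. Fano's inequality then bounds the probability that any (randomized) function $A:\Sigma^{\overline{K}} \to \Sigma$ correctly predicts $x[i^{\star}]$ by $\tfrac{1}{|\Sigma|} + O\!\bigl(\sqrt{\overline{K}/n}\bigr) \leq \tfrac{1}{2} + O\!\bigl(\sqrt{\overline{K}/n}\bigr)$.

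To transfer this to the original query model, I would invoke the hypothesis for each possible query list the decoder could produce: its answer distribution (under $\del$) is within TV distance $\kappa^{-1}$ of a randomized function of $f(\C(x))$ on at least a $1-\kappa^{-1}$ fraction of codewords. Summing the worst\nobreakdash-case loss from the two cases, the coordinate\nobreakdash-wise success probability of any $k$-query decoder for $x[i^{\star}]$ differs by at most $2\kappa^{-1}$ from that of the best compression\nobreakdash-based decoder, hence is at most $\tfrac{1}{2} + O\!\bigl(\sqrt{\overline{K}/n}\bigr) + 2\kappa^{-1}$. Choosing $K := K(\eps, k, |\Sigma|)$ large enough that $O\!\bigl(\sqrt{\overline{K}/K}\bigr) + 2\kappa^{-1} \leq \eps^{2}$ (possible since $\overline{K}$ and $\kappa$ are both constants depending only on $\eps, k, |\Sigma|$) yields the claimed bound for all $n > K$.

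The main delicate step is transferring the bound from the compression model back to the query model, because the hypothesis allows two simultaneous sources of error (a $\kappa^{-1}$ TV slack on good codewords, together with a $\kappa^{-1}$ fraction of bad codewords with no guarantee). I would handle this by conditioning on the good\nobreakdash-codeword event, losing at most $\kappa^{-1}$ in end\nobreakdash-to\nobreakdash-end success probability, and then bounding the conditional success by the compression\nobreakdash-based predictor plus the TV slack. Since $\kappa^{-1}$ is much smaller than $\eps^{2}$ by the choice $\kappa \geq \eps^{-k|\Sigma|}$, these error terms are easily absorbed, and the aggregation is quantitatively loose rather than tight.
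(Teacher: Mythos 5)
Your proposal is correct and is essentially the same information-theoretic argument as the paper's: both exploit that $f(\C(x))$ carries only $O(\overline{K})$ bits while a uniformly random $x$ has $n$ independent coordinates, and both absorb the $\kappa^{-1}$ TV slack and the $\kappa^{-1}$ fraction of bad codewords as small additive losses. The only difference is packaging --- you locate a single bad coordinate $i^\star$ directly via the chain rule for mutual information plus Fano, whereas the paper argues by contradiction using $H(X)\leq H(X|Y)+H(Y)$ and subadditivity of conditional entropy --- which is the same counting argument in a different guise.
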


\begin{proof}
    Assume for the sake of contradiction that the queries are capable of recovering every index $i$ with probability $\frac12+\eps^2$. Then, on $1-\kappa^{-1}$-fraction of inputs, the compressed version of it $f(\C(x))$ is capable of recovering $i$ with probability $\frac12+\eps+\kappa^{-1}$. 
    
    Perform the following process, which we'll call $Y$: take a random string $x\in \Sigma^n$, take the compressed string $f(\C(x))$ and use it to recover each of $x[1]\ldots x[n]$. We know that $H(Y)\leq \overline{K}$. Each index $i$ has a $\frac12+\eps^2-2\kappa^{-1}>\frac12+\eps^3$ chance (but not independent) of yielding the correct value of $x[i]$. In total, this means the entropy of each bit of the string $x$ conditional on the guesses provided by the string is at most $1-\eps^6$. By the subadditivity of entropy, the entropy of this is at most $(1-\eps^6)n$, so $H(Y)\leq (1-\eps^6)n$. Also, the entropy of the string $x$, which we'll call $H(X)$, is $n$.

    We know that 
    \begin{align*}
        H(X)&\leq H(X|Y)+H(Y) \\
        \implies n &\leq \overline{K} + (1-\eps^6)n
    \end{align*}
    which is a contradiction for sufficiently large $n$.
\end{proof}

Combining Lemma~\ref{lem:compression} and Lemma~\ref{lem:index-from-Fk} concludes the proof of Theorem~\ref{thm:main}.

Theorem~\ref{cor:lcc} follows by Theorem 4 in \cite{BlockiCGLZZ21}, which states that the existence of LCC's in the adversarial insertion/deletion setting would imply LDC's in the adversarial insertion/deletion setting, which don't exist by Theorem~\ref{thm:main}. 

Our result requires that deletion resilient LCC's would imply deletion resilient LDC's. Their proof carries over identically to this setting, and we briefly sketch why.

\begin{proof}[Proof Sketch] The main property of the corruption pattern they used was that no two codewords $z,z'$ in the codespace of an LCC could have Hamming distance less than $\eps' n$, where $\eps'$ is the allotted corruption for the adversary. This also holds in the deletion-only setting, because if two codewords $z$ and $z'$ were within Hamming distance $\eps' n$, the adversary could delete all the indices on which they differ. The resulting messages would be the same, and any bit that differed between the two encodings cannot be recovered.

Then, the codespace has VC dimension $d=O_{\eps,|\Sigma|}(n)$, by Lemma 6 in \cite{BlockiCGLZZ21}. Restrict the code to a subset $C'$ of size $\Sigma^d$ with the same shattering set, and define the corresponding code $\Sigma^d \to C'$, as the extension of the values on the shattering set into $C'$. This new code is an LCC that follows a systematic encoding, which would also make it an LDC on messages of length $d=O_{\eps,|\Sigma|}(n)$. This cannot exist by Theorem~\ref{thm:main}.
\end{proof}
\section{Acknowledgements}

Meghal Gupta is supported by an NSF Graduate Research Fellowship. The author would like to thank her advisor Dr. Venkatesan Guruswami for invaluable guidance, helpful discussions, and feedback on the manuscript. The author would also like to thank Omar Alrabiah and Yang Liu for helpful discussions and comments on the paper.

\bibliographystyle{alpha}
\bibliography{refs}

\end{document}